  \providecommand\BibTeX{{%
    \normalfont B\kern-0.5em{\scshape i\kern-0.25em b}\kern-0.8em\TeX}}}
\newtheorem{example}{Example}
\newtheorem{definition}{Definition}
\newtheorem{theorem}{Theorem}
\newtheorem{lemma}{Lemma}
\begin{document}
\title{NSPG-Miner: Mining Repetitive Negative Sequential Patterns}

\author{Yan Li}
\email{lywuc@163.com}
\affiliation{%
	\institution{School of Economics and Management, Hebei University of Technology}
	\city{Tianjin}
	\country{China}
	\postcode{300401}
}
\author{Zhulin Wang}
\email{1299056565@qq.com}
\affiliation{%
	\institution{School of Artificial Intelligence, Hebei University of Technology}
	\city{Tianjin}
	\country{China}
	\postcode{300401}
}

\author{Jing Liu}
\email{liujing@scse.hebut.edu.cn}
\affiliation{%
	\institution{School of Artificial Intelligence, Hebei University of Technology}
	\city{Tianjin}
	\country{China}
	\postcode{300401}
}

\author{Lei Guo}
\email{guoshengrui@163.com}
\affiliation{%
	\institution{State Key Laboratory of Reliability and Intelligence of Electrical Equipment, Hebei University of Technology}
	\city{Tianjin}
	\country{China}
	\postcode{300401}
}

\author{Philippe Fournier-Viger}
\email{philfv@szu.edu.cn}
\affiliation{%
	\institution{College of Computer Science and Software Engineering, Shenzhen University}
	\city{Shenzhen}
	\country{China}
	\postcode{518061}
}

\author{Youxi Wu}
\authornote{Corresponding authors}
\email{wuc567@163.com}
\affiliation{%
	\institution{School of Artificial Intelligence, Hebei University of Technology}
	\city{Tianjin}
	\country{China}
 \institution {and Hebei Key Laboratory of Big Data Computing}
	\country{China}
	\postcode{300401}
}
\author{Xindong Wu}
\email{xwu@hfut.edu.cn}
\affiliation{%
	\institution{the Key Laboratory of Knowledge Engineering with Big Data (the Ministry of Education of China), Hefei University of Technology}
	\city{Hefei}
	\country{China}
	\postcode{230009}
}
\begin{abstract}
Sequential pattern mining (SPM) with gap constraints (or repetitive SPM  or tandem repeat discovery in bioinformatics) can find frequent repetitive subsequences satisfying gap constraints, which are called positive sequential patterns with gap constraints (PSPGs). However, classical SPM with gap constraints cannot find the frequent missing items in the PSPGs. To tackle this issue, this paper explores negative sequential patterns with gap constraints (NSPGs). We propose an efficient NSPG-Miner algorithm that can mine both frequent PSPGs and NSPGs simultaneously. To effectively reduce candidate patterns, we propose a pattern join strategy with negative patterns which can generate both positive and negative candidate patterns at the same time. To calculate the support (frequency of occurrence) of a pattern in each sequence, we explore a NegPair algorithm that employs a key-value pair array structure to deal with the gap constraints and the negative items simultaneously and can avoid redundant rescanning of the original sequence,  thus improving the efficiency of the algorithm. To report the performance of NSPG-Miner, 11 competitive algorithms and 11 datasets are employed. The experimental results not only validate the effectiveness of the strategies adopted by NSPG-Miner, but also verify that NSPG-Miner can discover more valuable information than the state-of-the-art algorithms. Algorithms and datasets can be downloaded from https://github.com/wuc567/Pattern-Mining/tree/master/NSPG-Miner.
\end{abstract}

\begin{CCSXML}
	<concept>
	<concept_id>10010147.10010178.10010187</concept_id>
	<concept_desc>Computing methodologies Knowledge representation and reasoning</concept_desc>
	<concept_significance>500</concept_significance>
	</concept>
	</ccs2012>
	<ccs2012>
	<concept>
	<concept_id>10003752.10003809</concept_id>
	<concept_desc>Theory of computation Design and analysis of algorithms</concept_desc>
	<concept_significance>500</concept_significance>
	</concept>
\end{CCSXML}

\ccsdesc[500]{Systems and Applications~Data mining and knowledge discovery}
\ccsdesc[500]{AI Technology~Data mining and knowledge discovery}

\keywords{sequential pattern mining, gap constraint, negative sequential pattern, key-value pair}
\maketitle

\section{Introduction}\label{section1}
Sequential pattern mining (SPM) can be used to discover valuable subsequences (called patterns) from a large amount of data \cite {2_Gan2019, dse2024}. SPM has been widely used in many fields, such as customer purchase recommendation \cite{litkde2023}, biological information mining \cite{3_Zhang2020}, social security fraud discovery \cite{4_Okolica2020}, outlying sequence data analysis \cite {outlying}, feature extraction for sequence classification \cite {wucontrast, zhangclassification} and clustering \cite{litkde2024}, click-stream sequences analysis \cite{clickinf}, erasable pattern mining for production optimization \cite {unilerasable}, time-sensitive data stream analysis \cite { unilerasable2}, big data analysis \cite {Siesta}, missing item analysis \cite {dong2023, dong2018}, and medical-order sequence variants analysis\cite {medical}. To meet the diverse needs of users, various SPM methods have been proposed. For example, to discover frequent trends in time series, order-preserving SPM methods were explored \cite {wu2022oppminer, wuopr}.  To mine some lower frequency but more valuable patterns, high-utility mining \cite {10_Gan2021, utilitygan}, high-utility pattern mining with negative unit profits \cite {unil2022}, and high-average-utility mining \cite {unil-utility2} methods were investigated. 

However, the classic SPM methods only focus on whether a pattern occurs in a sequence, but do not pay attention to the number of occurrences of the pattern repetition in the sequence. To deal with this problem, SPM with gap constraints methods were studied \cite{33_Zhang2007, 32_Wu2014, 34_Wang2019}, where a pattern with gap constraints is denoted by $\mathbf p = p_{1}[a,b]p_{2}\ldots[a,b]p_{j}\ldots[a,b]p_{m}$, where $a$ and $b$ stand for the minimum and maximum wildcards between $ p_{j-1} $ and $p_{j}$, respectively~\cite{24_Li2021}. To meet different requirements, several variations of SPM with gap constraints have been proposed. For example, to extract keyphrase in text, one-off SPM with gap constraints was explored \cite{27_Xie2017}. To discover patterns that users are more interested in, items are categorized according to three interest levels, and three-way SPM with gap constraints \cite{12_Min2020,16_Wu2022} were explored. To discover pyramid fraud patterns \cite {pyramid2021}, sequential, time interval, and one-off constraints were considered simultaneously.  For sequence classification \cite {hetkde}, top-k contrast SPM with self-adaptive gap was proposed to mine contrast patterns as classification features \cite {coppminer, contrastzengyouhe}. To discover the frequent patterns with common prefix subpattern, co-occurrence pattern mining methods were explored \cite{guopaa, tmis2024}.

Current methods for SPM with gap constraints mainly mine subsequences that frequently repetitively occur in a sequence \cite{33_Zhang2007, 32_Wu2014, 34_Wang2019}, that is, positive sequential patterns with gap constraints (PSPGs), where the gap constraints do not care about the items in the gap. Therefore, it is difficult for users to know which items are ignored or do not occur through positive patterns, called negative patterns \cite {21_Gao2020,44_Guyet2020}. This means that current SPM with gap constraints methods are unable to discover more potentially important information. An illustrative example is given below. 

\begin{example}\label{exam1}
{ Suppose there is a sequence $ \mathbf s =$  $d_1d_2d_3d_4d_5d_6d_7$ = baacaac that is an activity record of a bank card user. The items ``a'', ``b'', and ``c'' represent the user activities of receiving money through bank transfer, paying bills or purchases, and sending money via bank transfer, respectively. Suppose the threshold $ \rho $ is four. }

{If we only mine PSPGs, we know that $ \mathbf p_{1}$ = a[0,1]c is a PSPG, since there are four occurrences of pattern $ \mathbf p_{1} $ in sequence $ \mathbf s $: $<2,4>, <3,4>, <5,7>, <6,7>$, which is no less than the threshold $ \rho $. Taking $<2,4>$ as an example, we know that $s_2$ and $s_4$ are ``a'' and ``c'', respectively, and there is one item between $d_2$ and $d_4$, which satisfies [0,1] gap constraint. Thus, $<2,4>$ is an occurrence of pattern a[0,1]c. Hence, $ \mathbf p_{1}$ = a[0,1]c is a PSPG.} 

{If we mine negative sequential patterns with gap constraints (NSPGs) at the same time, we can also find that there is no item ``b'' between positions 2 and 4, 3 and 4, 5 and 7, and 6 and 7 of $ \mathbf s $, i.e., pattern $ \mathbf p_{2}$ = a[0,1]$^{\overline {b}}$c also occurs four times in $ \mathbf s $. Therefore, $ \mathbf p_{2}$ = a[0,1]$^{\overline {b}}$c is an NSPG.}

{Therefore, compared with pattern a[0,1]c, pattern a[0,1]$^{\overline {b}}$c is more meaningful. The reasons are as follows. Pattern a[0,1]$^{\overline {b}}$c is frequent which indicates that its positive sequence pattern a[0,1]c is also frequent, since the occurrence of a negative pattern is more strict than that of its corresponding positive pattern. Moreover, we will prove that the support rate of an NSPG is not greater than that of its corresponding positive pattern in Theorem \ref {theo1}. In addition, pattern a[0,1]$^{\overline {b}}$c indicates that the subsequence without item ``b'' between ``a'' and ``c'' is also frequent.}
\end{example}

In Example \ref{exam1}, pattern $ \mathbf p_{1}$ = a[0,1]c shows that the user has transferred money out shortly after receiving funds, which is in accordance with the regular banking business. 
Pattern $ \mathbf p_{1} $ is a PSPG, which indicates that a[0,1]c is a frequent pattern, but it cannot indicate which items are missing, while pattern $ \mathbf p_{2}$ = a[0,1]$^{\overline {b}}$c is an NSPG that requires ``a'' and ``c'' to satisfy the gap constraints, and there should be no item ``b'' between ``a'' and ``c''. We find that the user has no payment behavior during the period of time after the funds are transferred in, which indicates that there is a possibility of money laundering. Therefore, NSPGs have high applicability and research value. They can find information that PSPGs cannot capture and thus provide more comprehensive information to the user. 

To discover positive and more meaningful negative patterns, inspired by negative SPM \cite{43_Dong2020}, based on the classical SPM with gap constraints \cite{33_Zhang2007, 32_Wu2014, 34_Wang2019}, we propose the concept of negative sequential pattern with gap constraints (NSPG), which refers to sequential patterns with gap constraints that contain both positive and negative items. The main contributions are as follows:

\begin{enumerate}
\item To discover frequent PSPGs and NSPGs, this paper studies a new issue named NSPG mining and proposes an NSPG-Miner algorithm, which performs two key tasks: candidate pattern generation and support calculation.

\item For candidate pattern generation, we propose a pattern join strategy with negative patterns to generate candidate patterns, which can generate both positive and negative candidate patterns and effectively reduce the number of candidate patterns.

\item To improve support calculation efficiency, we employ a key-value pair array structure that can avoid redundant rescanning of the original sequence dataset to improve the calculation efficiency of the support of candidate patterns.

\item  Experimental results demonstrate the effectiveness of NSPG-Miner and that it can find more meaningful negative patterns than the state-of-the-art algorithms.
\end{enumerate}

The structure of this paper is as follows. Section \ref{section2} discusses the related work. Section \ref{section3} gives the relevant definitions. Section \ref{section4} proposes NSPG-Miner. Section \ref{section5} reports the performance of NSPG-Miner on many real datasets. Section \ref{section6} summarizes the conclusions.

\section{Related Work}\label{section2}
In this section, we will summarize the related work concerning two aspects: SPM with gap constraints and negative SPM. 

\subsection{SPM with gap constraints}
Sequential pattern mining (SPM)  is an important knowledge discovery method \cite{pmdb2022,wu2022tmis}. A variety of SPM methods have been proposed to meet different requirements, such as rare pattern mining \cite {rarepattern,rarepattern2}, utility pattern mining \cite{hanpminer, unil-utility1, 6_Gan2021}, episode pattern mining \cite{13episodepattern, 14episoderule}, closed or maximal SPM  \cite{29_Wu2020, 30_Li2021}, process pattern mining \cite {processmining1, processmining2}, distributed pattern mining \cite {distributedmining}, spatial co-location pattern mining \cite{Wang2022colocation, Wang2023}, negative SPM \cite {dong2023, dong2018}, and SPM with gap constraints \cite{tkdd2012}. A disadvantage of traditional SPM methods is that these methods only pay attention to whether a pattern occurs in a sequence but disregard that a pattern may occur multiple times in it. Therefore, some important patterns may be missed using traditional SPM.  To overcome this issue, SPM with gap constraints was proposed \cite{33_Zhang2007}. 

SPM with gap constraints has many different names, such as repetitive SPM \cite {yongxintong} or tandem repeat discovery in bioinformatics \cite {tandemrepeat}. More importantly, SPM with gap constraints is more difficult than classical SPM, since the calculation of the pattern support in sequences becomes a pattern matching issue \cite{scis2017}. Moreover, SPM with gap constraints has four different methods to calculate the number of occurrences: no-condition  \cite{12_Min2020}, one-off condition (or one-off SPM) \cite{45_Wu2021,oneoff2}, nonoverlapping condition (or nonoverlapping SPM) \cite{14_Wu2018}, and disjoint condition \cite{disjoint2021}. The case of no-condition means that each item can be reused. The one-off condition means that each item can be used at most once \cite{28_Wu2021}. The nonoverlapping condition means that each item cannot be reused by the same $p_j$, but can be reused by different $p_j$ \cite{gengtkde2024}. The disjoint condition means that the maximum position of an occurrence should be less than the minimum position of the next occurrence. 

Among the above four methods, the no-condition was proposed first \cite {33_Zhang2007}. More importantly, the no-condition is arguably the most popular method \cite{nocondition2021}, since the support calculation of the no-condition can be regarded as calculating the number of all occurrences and it is easy to be understood, while the support calculations of other three methods can be regarded as only counting some specific occurrences.

Although the above SPM with gap constraints methods can find interesting patterns in various situations, they ignore the missing items in sequences. Therefore, it is easy to overlook potentially important information.

\subsection{Negative SPM}
{To discover missing items, Ouyang and Huang \cite {ouyang2007} first proposed three forms of negative SPM in transaction databases: $<$$\overline {X}$, Y$>$, $<$X, $\overline {Y}$$>$, and $<$$\overline {X}$, $\overline {Y}$$>$. One of the drawbacks of this study is that it limits the position of negative items in negative sequence pattern mining. To address this problem, Hsueh et al. \cite{39_Hsueh2008} designed a PNSP algorithm to mine negative sequential patterns in which the location of negative items is more flexible. Based on PNSP, Zheng et al. \cite{40_Zheng2009} introduced a GSP-like method, called neg-GSP, to mine negative sequential patterns. Both PNSP and neg-GSP do not satisfy the Apriori property, and they need to mine positive sequential patterns before mining negative sequential patterns. Therefore, the running performances of PNSP and neg-GSP need to be improved.} 

Note that in NSP mining, there is currently no unified definition of negative containment due to the concealment of non-occurrence items \cite{dong2019}. Therefore, various negative SPM were proposed \cite {hannegative, 18_Wang2021}. Among them,  e-NSP \cite{41_Cao2016},  e-RNSP \cite{43_Dong2020}, and ONP-Miner \cite{onpminer} are three state-of-the-art methods. e-NSP \cite{41_Cao2016} and e-RNSP  \cite{43_Dong2020} are two very similar methods, since both e-NSP and e-RNSP discover frequent positive patterns at first. Then, they generate negative candidate patterns based on frequent positive patterns and mine frequent negative patterns. The difference between e-NSP and e-RNSP is that the former does not consider the repetition of patterns in the sequence, while the latter does.  For example, the support of pattern ``abc'' in sequence  $ \mathbf s $ =acabcabcac is one according to e-NSP, since e-NSP does not consider the repetition of patterns in the sequence, while the support of pattern ``abc''  in sequence  $ \mathbf s $ =acabcabcac is two according to e-RNSP, since e-RNSP considers the repetition. 

However, ONP-Miner is far different from e-NSP and e-RNSP, since ONP-Miner \cite {onpminer} has gap constraints, while e-NSP \cite{41_Cao2016} and e-RNSP  \cite{43_Dong2020} do not have gap constraints. Due to the consideration of gap constraints, ONP-Miner can discover more valuable negative patterns, and the experimental results showed that ONP-Miner can mine more negative patterns than e-NSP and e-RNSP.

ONP-Miner is a negative SPM with gap constraints under the one-off condition and adopts one-off pattern matching to calculate the support of a pattern. Unfortunately, it was proven that one-off pattern matching with gap constraints is an NP-Hard problem \cite {28_Wu2021, oneoff2}, which means that the support cannot be accurately calculated and can only be approximated calculation using a heuristic strategy. Thus, ONP-Miner is an approximate mining algorithm that may lose some feasible patterns. 


{To tackle the above issues, compared with the previous research, the novelties of this paper are three aspects. }

\begin {enumerate}
\item { To overcome the shortages of ONP-Miner \cite {onpminer}, this paper investigates repetitive negative SPM with gap constraints under no condition which can mine all feasible positive and negative patterns.}

\item {Classical negative SPM methods, such as e-NSP \cite{41_Cao2016} and e-RNSP \cite{43_Dong2020}, discover frequent positive patterns and then generate negative candidate patterns, which generate redundant candidate negative patterns. To effectively reduce the number of candidate patterns, we propose a pattern join strategy with negative patterns to generate candidate patterns based on Theorems 1 and 2, which can simultaneously generate both positive and negative candidate patterns.}

\item { Classical SPM with gap constraints method \cite{32_Wu2014} adopts Incomplete-Nettree structure to calculate the supports of candidate patterns with the same prefix pattern, which requires scanning the whole sequence for each prefix pattern. To improve the efficiency, we adopt the key-value arrays of prefix and suffix subpatterns to calculate the supports of both positive and negative candidate patterns with gap constraints that can avoid redundant rescanning of the original sequence dataset, thus improving the calculation efficiency of the support of candidate patterns.}

\end {enumerate}

\section{Problem definition}\label{section3}
\begin{definition}
{(Sequence and sequence database)}    	A sequence with a length of $l$ can be represented by $\mathbf s = d_{1}d_{2} \ldots d_{l}$. A sequence database consists of $n$ sequences and is expressed as $SDB = {\mathbf s_{1},\mathbf s_{2} \ldots \mathbf s_{n}}$. The sum of the lengths of all sequences in $SDB$ is the length of $SDB$, which is expressed as $L$. A collection of different items in SDB is expressed as $\Sigma$. $|\Sigma|$ represents the number of items in $\Sigma$.
\end{definition}
    
{Example \ref {exam2} illustrates the concepts of sequence and sequence database.}    

\begin{example}\label{exam2}
Suppose we have a sequence database $SDB=\{\mathbf s_{1}, \mathbf s_{2}\}$, $\mathbf s_{1} =$ baacaac, and $\mathbf s_{2} = $ ababccbb. The lengths of $\mathbf s_{1}$ and $\mathbf s_{2}$ are $l_{1} = 7$ and $l_{2} = 8$, respectively. Hence, the length of SDB is $L = l_{1} + l_{2} =15$. The sequence database items  $\Sigma=\{$a,b,c$\}$, and $|\Sigma| = 3$.
    \end{example}
    
\begin{definition}\label{def2} 
{(Negative item)} 	  Given an item $e$ $(e\in\Sigma)$, its corresponding negative item is $\overline {e}$. The negative item $\overline {e}$ means that there is no item $e$ between $d_{j}$ and $d_{k}$, where $j<k$.
    \end{definition}

{Example \ref {exam3} illustrates the concept of negative item.}    
    
\begin{example}\label{exam3}
  Suppose we have a sequence $\mathbf s_{1} = d_{1}d_{2}\ldots d_{7} = baacaac$, $\Sigma = \{$a,b,c$\}$. There are no item ``b'' and ``c'' between $d_{5}$ and $d_{7}$ in $\mathbf s_{1}$. Thus, negative items $\overline {b}$ and $\overline {c}$ exist between $d_{5}$ and $d_{7}$ in $\mathbf s_{1}$.
\end{example}
    
\begin{definition}\label{def3}  
{(NSPG and PSPG)}
 An NSPG can be expressed as $\mathbf p = p_{1}[a,b]^{\overline {e_{1}}}p_{2}\ldots p_{j}[a,b]^{\overline {e_{j}}}p_{j+1}$ $\ldots [a,$ $b]^{\overline {e_{m-1}}}p_{m}$, where $p_{j}$ is a positive item, $p_{j}\in\Sigma$, $a$ and $b$ stand for the minimum and maximum wildcards between $ p_{j-1} $ and $p_{j}$, respectively, and the number of positive items in the pattern is called the length of $\mathbf p$. $\overline {e_{j}} $ is a negative item: $e_{j}\in\Sigma$ or $e_{j}=null$. If all $e_{j}$ are null, then pattern $\mathbf p$ is a PSPG.
\end{definition}

{Example \ref {exam4} illustrates the concepts of NSPG and PSPG.}    
        
\begin{example}\label{exam4}
Suppose we have a pattern $\mathbf p = p_{1}[a,b]^{\overline {e_{1}}}p_{2}[a,b]^{\overline {e_{2}}}p_{3}=$ a[0,1]a[0,1]$^{\overline {b}}$c, which is an NSPG.  $\mathbf p$ contains three positive items: a, a, and c. Hence, the length of $\mathbf p$ is 3. There is no negative item in a[0,1]a, since $\overline {e_{1}}$ = null. $\overline {e_{2}} = \overline {b}$, which means that there is no item b between $p_{2}$ = a and $p_{3}$ = c. The corresponding PSPG of pattern $\mathbf p$ is a[0,1]a[0,1]c.
\end{example}

\begin{definition}\label{def4}
{ (Offset sequence)  If any two indexes $i_{j-1}$ and $i_{j}$ $(2\leq j\leq l)$ in  $I = <i_{1}, i_{2} \ldots i_{m}>$ satisfy $M \leq i_{j}-i_{j-1}-1 \leq N$, then $I$ is an offset sequence of pattern $\mathbf p$ in sequence $\mathbf s$. The number of all offset sequences of $\mathbf p$ in $\mathbf s$ is represented by $ofs(\mathbf p, \mathbf s)$. The total offset sequences in $SDB$ is represented by $ofs(\mathbf p, SDB)$, and $ofs(\mathbf p, SDB)=\sum\nolimits_{i=1}^nofs(\mathbf p,\mathbf s_{i})$.}
\end{definition}

{Lemma \ref{lemma1} shows the calculation method of offset sequence, and Example \ref {exam5} illustrates the calculation method of offset sequence.}

 \begin{lemma}\label{lemma1}
The total number of offset sequences of $\mathbf p$ in $SDB$ can be calculated according to $ofs(\mathbf p, SDB) = L\times W^{m-1}$, where $L$ is the length of $SDB$, $m$ is the length of $\mathbf p$, and $W = M-N+1$ ~\cite{12_Min2020}.
\end{lemma}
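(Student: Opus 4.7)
The plan is to prove the formula by a bijective counting argument that parametrizes each offset sequence by a starting index together with the tuple of gap values between consecutive positions.

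First, I would set up a bijection between offset sequences $I = \langle i_{1}, i_{2}, \ldots, i_{m}\rangle$ and pairs $(i_{1}, (g_{2}, \ldots, g_{m}))$, where $g_{j} = i_{j} - i_{j-1} - 1$ denotes the gap between consecutive positions; the inverse map is the prefix-sum reconstruction $i_{j} = i_{1} + \sum_{k=2}^{j} (g_{k} + 1)$. By Definition \ref{def4}, a tuple $I$ is an offset sequence precisely when each $g_{j}$ lies in the integer interval $[M, N]$, so counting offset sequences reduces to independently counting admissible starting indices and admissible gap tuples.

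Next I would apply the product rule. Each gap $g_{j}$ has $W$ choices, where $W$ is the cardinality of $[M, N]$ (the expression in the statement should be read as $W = N - M + 1$; the ordering of the endpoints is evidently a typographical slip). With $m - 1$ gaps this contributes a factor of $W^{m-1}$. For the starting index $i_{1}$, taking one choice per position across every sequence of $SDB$ yields $L = \sum_{i=1}^{n} l_{i}$ possibilities. Multiplying the two independent counts and then using the definitional identity $ofs(\mathbf p, SDB) = \sum_{i=1}^{n} ofs(\mathbf p, \mathbf s_{i})$ gives the claimed $L \times W^{m-1}$.

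The step I expect to require the most care is boundary handling: if $i_{1}$ lies too close to the end of its host sequence, then some admissible gap tuples drive $i_{m}$ past the end of that sequence, so a strict literal count could in principle be smaller than $L \times W^{m-1}$. I would resolve this by following the no-condition convention of \cite{12_Min2020}, under which $L \times W^{m-1}$ is taken as the theoretical maximum number of offset sequences and is exactly the quantity later used as the denominator when normalizing support into a support rate; under that convention the product-rule count above is exact, and the lemma follows.
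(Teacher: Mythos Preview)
Your counting argument is correct and complete: the bijection between offset sequences and pairs $(i_{1},(g_{2},\ldots,g_{m}))$ with each $g_{j}\in[M,N]$ immediately yields the product $L\times W^{m-1}$, and you rightly flag both the typographical slip (the intended $W=N-M+1$) and the boundary convention. There is nothing to compare against in the paper itself, however: Lemma~\ref{lemma1} is stated without proof and simply attributed to~\cite{12_Min2020}, so the paper treats it as an imported fact rather than something it argues for. Your derivation is exactly the standard justification one would expect from that source.
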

 
\begin{example}\label{exam5}
Suppose we have a pattern $\mathbf p$ = $a[0,1]a[0,1]^{\overline {b}}c$, and the sequence database $SDB$ is the same as in Example \ref {exam2}. Since the length of $\mathbf p$ is $m = 3$, the offset sequence length of $\mathbf p$ is also 3. The total number of offset sequences of $\mathbf p$ in $SDB$ is $ofs(\mathbf p,SDB)=L\times W^{m-1}=15\times(1-0+1)^{3-1}=60$.
\end{example}
    
{Based on the concept of offset sequence, we show the definition of occurrence in Definition \ref {def5}.}

\begin{definition}    \label{def5} 
(Occurrence) 	    Suppose $I = <i_{1}, i_{2} \ldots i_{m}>$ is an offset sequence. If $d_{i_{j}} = p_{j}$ $(1 \leq j \leq m)$ and there is no item $e_{j}$  ($e_j \in \Sigma$) between $d_{i_{j}}$ and $d_{i_{j+1}}$ , then $I$ is an occurrence of $\mathbf p$ in $\mathbf s$.
\end{definition}

{Example \ref {exam6} illustrates the occurrences of a pattern in a sequence.}

\begin{example}\label{exam6}
For a pattern $\mathbf p = p_{1}[a,b]^{\overline {e_{1}}}p_{2}[a,b]^{\overline {e_{2}}}p_{3}$ = $a[0,1]a[0,1]^{\overline {b}}c$ and a sequence $\mathbf s_{1} = d_{1}d_{2}\ldots d_{7}$ = baacaac, $I = <3,5,7>$ is an offset sequence of $\mathbf p$ in $\mathbf s_{1}$, since positions 3 and 5 satisfy the gap constraint $0 \leq 5-3-1 \leq 1$, and so do positions 5 and 7. $I$ is an occurrence of $\mathbf p$ in $\mathbf s_{1}$, since $d_{3} = p_{1}$ = a, $d_{5} = p_{2}$ = a, $d_{7} = p_{3}$ = c, and there is no item $e_{2}$ = b between $d_{5}$ and $d_{7}$. All occurrences of $\mathbf p$ in $\mathbf s_{1}$ are shown in Figure \ref{fig2}.
\end{example}

\begin{figure}
\centering
\includegraphics [width=0.5\textwidth] {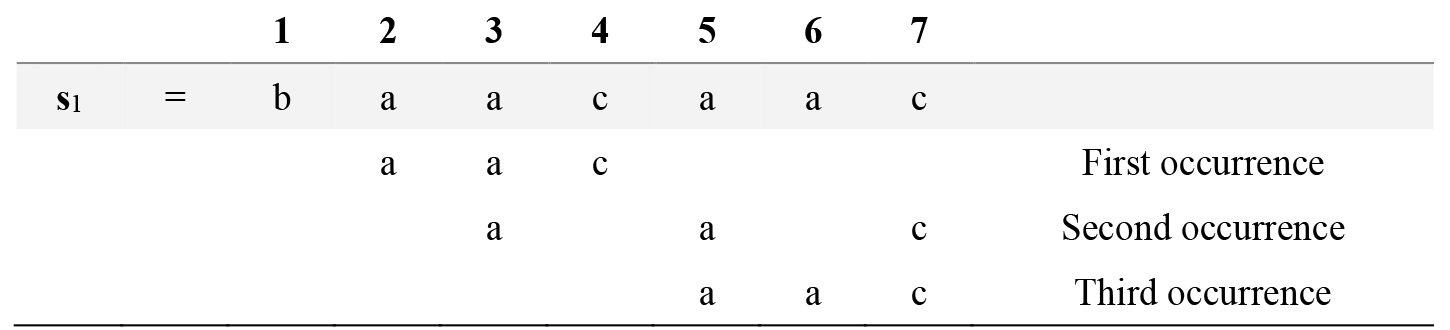}
\caption{All occurrences of pattern $ \mathbf p $ in sequence $ \mathbf s_{1} $}
\label{fig2}
\end{figure}

{Based on the concepts of offset sequence and occurrence, we show the definitions of support, support rate, and frequent pattern in Definition \ref {def6}.}

\begin{definition}\label{def6} 
{(Support, support rate, and frequent pattern)}    	The total number of occurrences of pattern $ \mathbf p $ in $\mathbf s$ is called the support, represented by $sup(\mathbf p, \mathbf s)$ in $\mathbf s$, and $sup(\mathbf p, SDB)$ is the sum of the supports of each pattern in $SDB$, i.e., $sup(\mathbf p, SDB)= \sum\nolimits_{i=1}^nsup(\mathbf p,\mathbf s_{i})$. The support rate of $ \mathbf p $ in $SDB$, represented by $r(\mathbf p, SDB)$, is the rate of its support and offset, i.e., $r(\mathbf p, SDB) = sup(\mathbf p, SDB)/ ofs(\mathbf p, SDB)$. If $r(\mathbf p, SDB)$ is no less than the given threshold value $ \rho $, then $\mathbf p$ is a frequent pattern.
\end{definition}

{Example \ref {exam7} illustrates the concepts of support, support rate, and frequent pattern.}
    
\begin{example}\label{exam7}
Suppose we have a pattern $\mathbf p$ = $a[0,1]a[0,1]^{\overline {b}}c$ and a threshold $ \rho=0.13 $, and the sequence database $SDB$ is the same as in Example \ref{exam2}. As shown in Figure \ref {fig2}, $\mathbf p$ occurs three times in $\mathbf s_{1}$, i.e., $sup(\mathbf p, \mathbf s_{1}) = 3$. Similarly, $\mathbf p$ does not occur in $\mathbf s_{2}$, i.e., $sup(\mathbf p, \mathbf s_{2}) = 0$. Thus, $sup(\mathbf p, SDB) = 3+0 = 3$. From Example \ref{exam5}, we know that $ofs(\mathbf p, SDB) = 60$. Therefore, $r(\mathbf p, SDB) = 3/60 = 0.05$. Hence, $\mathbf p$ is not a frequent pattern, since $r(\mathbf p,SDB) < \rho$.
\end{example}
 
\begin{definition}\label{def7} 
{(Our problem)}	    Given a sequence database, gap constraints, and a threshold, our goal is to mine all frequent PSPGs and NSPGs.
\end{definition}

{Example \ref {exam8} illustrates our problem and its results.}

\begin{example}\label{exam8}
Suppose we have a sequence database $SDB$ (as in Example \ref {exam2}), a gap constraint [0,1], and a threshold $\rho = 0.13$. The frequent PSPGs and NSPGs are $\{$$a$, $b$, $c$, $a[0,1]a$, $a[0,1]^{\overline {a}}a$, $a[0,1]c$, $a[0,1]^{\overline {b}}c$, $a[0,1]^{\overline {c}}c$$\}$.
\end{example}

\section{Algorithm Design} \label{section4}
In this section, we propose NSPG-Miner, which performs two key operations: candidate pattern generation and support calculation.
 The framework of  NSPG-Miner is shown in Figure \ref{frame}. Section \ref{sect4.1} describes the candidate pattern generation strategy. Section \ref{sect4.2} presents the NegPair algorithm to calculate the support. Section \ref{sect4.3} proposes the NSPG-Miner algorithm to mine all frequent PSPGs and NSPGs.

\begin{figure}[h]
		\centering
		\includegraphics [width=0.7\textwidth]  {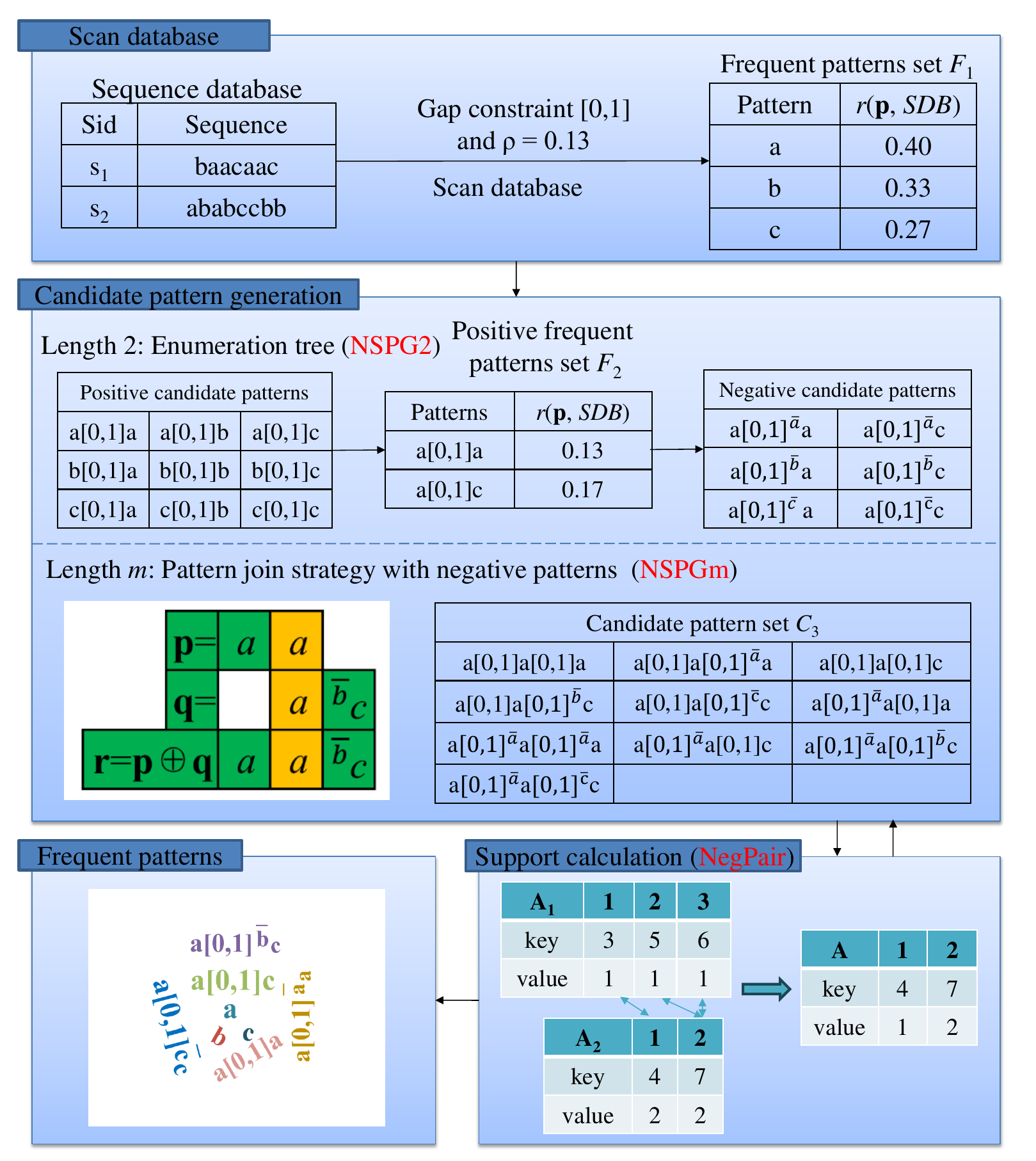}
		\caption{{Framework of NSPG-Miner. NSPG-Miner has two essential parts: candidate pattern generation and support calculation. To effectively generate candidate patterns with length $m$ ($m>$ 2), we propose a pattern join strategy with negative patterns which can generate positive and negative candidate patterns simultaneously. To improve the efficiency of support calculation, we employ a key-value array structure to avoid redundant scanning of the whole sequence.}}
		\label{frame}
	\end{figure}

\subsection{Candidate pattern generation} \label {sect4.1}

In SPM with gap constraints, the number of candidate patterns is one of the main factors affecting the mining efficiency, since we need to calculate the support for each candidate pattern and then determine whether it is frequent or not. Thus, the more candidate patterns, the slower the mining speed. 

{ The enumeration-tree strategy is a commonly used strategy to generate candidate patterns \cite{apin2020,tmis2021}, since this method can generate all possible candidate patterns, thus ensuring the completeness of the mining algorithm. However, this method will generate some redundant positive and negative candidate patterns. The classical negative SPM algorithms, such as e-NSP \cite{41_Cao2016} and e-RNSP \cite{43_Dong2020}, usually adopt the classical pattern join strategy \cite{33_Zhang2007} to generate positive candidate patterns and then discover positive frequent patterns. Finally, these algorithms generate negative candidate patterns by the enumeration method based on these positive frequent patterns. However, this kind of method will also generate some redundant negative candidate patterns. To overcome the shortages of the above methods, we propose a pattern join strategy with negative patterns which can generate positive and negative candidate patterns simultaneously. Some related definitions and theorems are as follows. }

\begin{definition}\label{def8} (Prefix pattern and suffix pattern)
Suppose we have a superpattern $\mathbf p = p_{1}[a,b]^{\overline {e_{1}}}p_{2}$ $\ldots  ^{\overline {e_{m-2}}}p_{m-1}[a,$ $b]$ $^{\overline {e_{m-1}}}p_{m}$. Its prefix pattern is $p_{1}[a,b]^{\overline {e_{1}}}p_{2}\ldots ^{\overline {e_{m-2}}}p_{m-1}$, represented by Prefix$(\mathbf p)$, and its suffix pattern is $p_{2}\ldots ^{\overline {e_{m-2}}}p_{m-1}$ $[a,b]^{\overline {e_{m-1}}}p_{m}$, represented by Suffix$(\mathbf p)$.
\end{definition}

{In order to effectively prune candidate patterns, we propose the following theorems. }

\begin{theorem}
   {The support rate of an NSPG is not greater than that of its corresponding positive pattern.}\label{theo1}  
\end{theorem}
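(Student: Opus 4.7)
The plan is to reduce the inequality $r(\mathbf p, SDB) \le r(\mathbf p', SDB)$ (where $\mathbf p'$ denotes the positive pattern obtained from the NSPG $\mathbf p$ by setting every $\overline{e_j}$ to $null$) to two observations: the two patterns induce the \emph{same} number of offset sequences, and every occurrence of $\mathbf p$ is also an occurrence of $\mathbf p'$. Since $r$ is the ratio of these two quantities, the claim will then follow by a single division.

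First I would observe that $\mathbf p$ and $\mathbf p'$ have the same length $m$ and identical gap bracket $[a,b]$ on each wildcard, because the construction of $\mathbf p'$ only deletes the negative superscripts $\overline{e_j}$ and leaves the positive items $p_1,\ldots,p_m$ and all gap intervals untouched. Invoking Lemma~\ref{lemma1} directly gives $ofs(\mathbf p, SDB) = L \times W^{m-1} = ofs(\mathbf p', SDB)$, so the denominators in the two support rates are equal and strictly positive.

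Next I would compare numerators sequence-by-sequence. Fix any $\mathbf s \in SDB$ and any occurrence $I = \langle i_1, \ldots, i_m \rangle$ of $\mathbf p$ in $\mathbf s$. By Definition~\ref{def5}, $I$ is an offset sequence satisfying the gap constraints, $d_{i_j} = p_j$ for all $j$, and the additional non-occurrence conditions on each non-null $\overline{e_j}$. Dropping the non-occurrence conditions only \emph{weakens} the requirement, so $I$ remains an occurrence of $\mathbf p'$. Hence the set of occurrences of $\mathbf p$ in $\mathbf s$ is a subset of the set of occurrences of $\mathbf p'$ in $\mathbf s$, giving $sup(\mathbf p, \mathbf s) \le sup(\mathbf p', \mathbf s)$. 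Summing over $i = 1, \ldots, n$ yields $sup(\mathbf p, SDB) \le sup(\mathbf p', SDB)$.

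Finally I would divide both sides of the support inequality by the common positive value $ofs(\mathbf p, SDB) = ofs(\mathbf p', SDB)$ to conclude $r(\mathbf p, SDB) \le r(\mathbf p', SDB)$. There is no real obstacle here; the only subtlety worth flagging explicitly is the invariance of $ofs$ under adding negative-item annotations, which is why the argument reduces to a clean set-containment plus a division rather than requiring any combinatorial counting. The proof is therefore essentially a direct consequence of Definitions~\ref{def4}--\ref{def6} together with Lemma~\ref{lemma1}.
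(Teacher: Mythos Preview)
Your proposal is correct and follows essentially the same approach as the paper's own proof: both argue that every occurrence of the NSPG is also an occurrence of its corresponding positive pattern (giving $sup \le sup$), observe that the two patterns share the same length and gap constraints so their offset counts coincide, and then divide. Your write-up is a bit more explicit in invoking Lemma~\ref{lemma1} and summing sequence-by-sequence, but the underlying argument is identical.
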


\begin {proof}
{ Suppose we have an NSPG $\mathbf t = p_{1}[a,b]$ $^{\overline {e_{1}}}$ $p_{2} $ $\ldots[a,b]^{\overline {e_{m-1}}}p_{m}$ with a length of $m$,  $e_j \in \Sigma$, and its corresponding positive pattern is $\mathbf p = p_{1}[a,b]p_{2}$ $ \ldots[a,b] p_{m}$. If $I$ = $<i_{1}, i_{2}, \ldots,  i_{m}>$ is an occurrence of pattern $\mathbf t$, then we can safely say that $I$ is also an occurrence of pattern $\mathbf p$, since  there is no item $e_{j}$  ($e_j \in \Sigma$) between $d_{i_{j}}$ and $d_{i_{j+1}}$ according to Definition \ref {def5}, which is more strict than positive pattern. Therefore, $sup(\mathbf t,SDB$) $\leq$ $sup(\mathbf p,SDB$). We know the length of $\mathbf t$ is the same as that of $\mathbf p$. Furthermore, the offset sequence of $\mathbf t$ is the same as that of $\mathbf p$. Hence, $r(\mathbf t,SDB$) $\leq$ $r(\mathbf p,SDB$), i.e., the support rate of an NSPG is not greater than that of its corresponding positive pattern.}
\end {proof}

{According to Theorem \ref {theo1}, if an NSPG is frequent, then its corresponding positive pattern is also frequent. Moreover, if a positive pattern is not frequent, then all its negative patterns are not frequent. Now, we further prove that NSPG mining satisfies the  Apriori property. }

\begin{theorem}
NSPG mining satisfies the  Apriori property.
\label{theorem1}  
\end {theorem}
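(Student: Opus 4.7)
The plan is to establish the Apriori property in its standard downward-closed form: every subpattern of a frequent pattern is itself frequent, which it suffices to verify for the length-$(m-1)$ prefix and suffix patterns of Definition \ref{def8}. Concretely, I will show that for any pattern $\mathbf{p}$ of length $m \geq 2$,
\[
r(\mathbf{p}, SDB) \leq r(\text{Prefix}(\mathbf{p}), SDB) \quad \text{and} \quad r(\mathbf{p}, SDB) \leq r(\text{Suffix}(\mathbf{p}), SDB),
\]
so that if $\mathbf{p}$ meets the threshold $\rho$, both shorter patterns must also meet it.

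The first step is to construct a truncation map from occurrences of $\mathbf{p}$ to occurrences of $\text{Prefix}(\mathbf{p})$. Given an occurrence $I = \langle i_1, \dots, i_m \rangle$ of $\mathbf{p}$ in some $\mathbf{s} \in SDB$, let $I' = \langle i_1, \dots, i_{m-1} \rangle$. I will verify that $I'$ is an occurrence of $\text{Prefix}(\mathbf{p})$: the positive-item alignment $d_{i_j} = p_j$ for $j \leq m-1$ is inherited, each interior gap constraint $[a,b]$ on consecutive indices still holds, and the interior negative constraints $\overline{e_1}, \dots, \overline{e_{m-2}}$ are preserved verbatim because the relevant substrings of $\mathbf{s}$ are unchanged. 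Only the final constraint $\overline{e_{m-1}}$ between $p_{m-1}$ and $p_m$ is dropped, which is exactly what distinguishes $\mathbf{p}$ from $\text{Prefix}(\mathbf{p})$. Next I will bound the preimage size: every occurrence $I'$ of $\text{Prefix}(\mathbf{p})$ extends to at most $W$ occurrences of $\mathbf{p}$, because $i_m$ must satisfy $a \leq i_m - i_{m-1} - 1 \leq b$, leaving only $W$ candidate positions. Hence $sup(\mathbf{p}, SDB) \leq W \cdot sup(\text{Prefix}(\mathbf{p}), SDB)$.

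Combining this with Lemma \ref{lemma1}, which gives $ofs(\mathbf{p}, SDB) = L \cdot W^{m-1}$ and $ofs(\text{Prefix}(\mathbf{p}), SDB) = L \cdot W^{m-2}$, I divide to obtain
\[
r(\mathbf{p}, SDB) = \frac{sup(\mathbf{p}, SDB)}{L \cdot W^{m-1}} \leq \frac{W \cdot sup(\text{Prefix}(\mathbf{p}), SDB)}{L \cdot W^{m-1}} = r(\text{Prefix}(\mathbf{p}), SDB).
\]
A symmetric argument, dropping $i_1$ instead of $i_m$ and bounding the preimage by the $W$ admissible values of $i_1$ for a fixed suffix occurrence, yields $r(\mathbf{p}, SDB) \leq r(\text{Suffix}(\mathbf{p}), SDB)$. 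Together these two inequalities give the Apriori property.

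The main obstacle I anticipate is the bookkeeping around the negative items: I must spell out that truncation removes only the outermost negative constraint and leaves every interior $\overline{e_j}$ constraint intact, so that the truncated index list is genuinely an occurrence of the shorter pattern rather than an offset sequence that happens to violate a dropped negative requirement. Once this is pinned down, the preimage bound by $W$ and the algebraic division are routine consequences of Definitions \ref{def4}--\ref{def6} and Lemma \ref{lemma1}.
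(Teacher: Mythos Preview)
Your proposal is correct and follows essentially the same counting argument as the paper: bound the number of length-$m$ occurrences extending a fixed length-$(m-1)$ occurrence by $W$, then divide by the offset formula from Lemma~\ref{lemma1}. The only organizational difference is that the paper first proves the inequality for the \emph{positive} superpattern $\mathbf q=\mathbf p[a,b]p_m$ and then invokes Theorem~\ref{theo1} to cover the negative extension $\mathbf t=\mathbf p[a,b]^{\overline{e_{m-1}}}p_m$, whereas you treat the general (possibly negative) pattern directly by observing that truncation simply discards the outermost constraint $\overline{e_{m-1}}$ while preserving all interior ones; your route is slightly more direct but not materially different.
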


\begin {proof}
{Suppose we have a pattern $\mathbf p$ with length $m$-1. $\mathbf p$ has a positive superpattern $\mathbf q = \mathbf p[a,b]p_{m}$. Suppose $I = <i_{1}, i_{2}, \ldots, i_{m-1}>$ and $I^{'} = <i_{1}, i_{2}, \ldots, i_{m-1}, i_{m}>$ are the occurrences of $\mathbf p$ and $\mathbf q$ in $\mathbf s$, respectively. We know that $i_{m-1}$ and $i_{m}$ satisfy the gap constraints, i.e., $i_{m}$ is from $i_{m-1}+M+1$ to $i_{m-1}+N+1$. Thus, $I^{'}$ is $W$ times more than $I$ at most, i.e., $sup(\mathbf q, \mathbf s) \leq sup(\mathbf p, \mathbf s)\times W$.  Furthermore, $ \sum\nolimits_{i=1}^nsup(\mathbf q, \mathbf s_{i})$ $\leq $ $\sum\nolimits_{i=1}^nsup(\mathbf p, \mathbf s_{i})\times W $, i.e., $sup(\mathbf q,SDB) \leq  sup(\mathbf p,SDB)\times W$. According to Lemma \ref{lemma1}, we know that $ofs(\mathbf q,SDB) = L\times W^{m-1}$ and  $ofs(\mathbf p,SDB) = L\times W^{m-2}$. Thus, $ ofs(\mathbf q,SDB) = ofs(\mathbf p, SDB)\times W$. Hence,  $ r(\mathbf q,SDB) =\frac{sup(\mathbf q,SDB)}{ofs(\mathbf q,SDB)}\leq \frac{sup(\mathbf p,SDB)}{ofs(\mathbf p,SDB)}=r(\mathbf p,SDB)$. Therefore, if $\mathbf p$ is not a frequent pattern, i.e., $r(\mathbf p,SDB) < \rho$, then its superpattern  $\mathbf q$  is not frequent, since $r(\mathbf q,SDB) \leq r(\mathbf p,SDB) < \rho$.  Moreover, according to Theorem \ref {theo1}, we know that negative pattern  $\mathbf t = \mathbf p[a,b]^{\overline {e_{m-1}}}p_{m}$ is not frequent, since its positive pattern $\mathbf q$ is not frequent. Similarly, we know that if a suffix pattern is not a frequent pattern, then its superpatterns are not frequent patterns, either. Hence, NSPG mining satisfies the Apriori property.}
\end{proof}

Since NSPG mining satisfies the Apriori property, to effectively prune the candidate patterns, this paper proposes a candidate pattern generation method which is called pattern join strategy with negative patterns which is shown in Definition \ref {def9}.

\begin{definition}\label{def9} (Pattern join strategy with negative patterns)	{	Suppose we have two patterns $\mathbf p = p_{1}[a,b]$ $^{\overline {e_{1}}}p_{2}\ldots [a,b]^{\overline {e_{m-2}}}$ $p_{m-1} [a,b]^{\overline {e_{m-1}}}p_{m}$ and $\mathbf q = q_{1}[a,b]^{\overline {f_{1}}}q_{2}\ldots [a,b]^{\overline {f_{m-2}}}q_{m-1}$ $[a,b]^{\overline {f_{m-1}}}q_{m}$. If $ \mathbf r$ = Suffix($\mathbf p$) = Prefix($\mathbf q$), then $\mathbf p$ and $\mathbf q$ can generate a superpattern $\mathbf t=\mathbf p\oplus\mathbf q=p_{1}[a,b]^{\overline {e_{1}}}\mathbf r^{\overline {f_{m-1}}}q_{m}$, where $\mathbf t$ is a candidate pattern with a length of $m+1$. This process is called a pattern join strategy with negative patterns.}
\end{definition}

{Example \ref{exam10} illustrates the principle of pattern join strategy with negative patterns.} 

\begin{example}\label{exam10}
		Suppose we have two patterns, $\mathbf p$ = a[0,1]a and $\mathbf q$ = $a[0,1]^{\overline {b}}c$. Since $\mathbf r$ = Suffix($\mathbf p$) = Prefix$(\mathbf q)$ = a, $\mathbf p$ and $\mathbf q$ can generate a candidate $\mathbf t = \mathbf p\oplus\mathbf q$= $a[0,1]a[0,1]^{\overline {b}}c$, as shown in Figure \ref{fig3}.
	\end{example}
	\begin{figure}[h]
		\centering
		\includegraphics[width=0.12\textwidth]{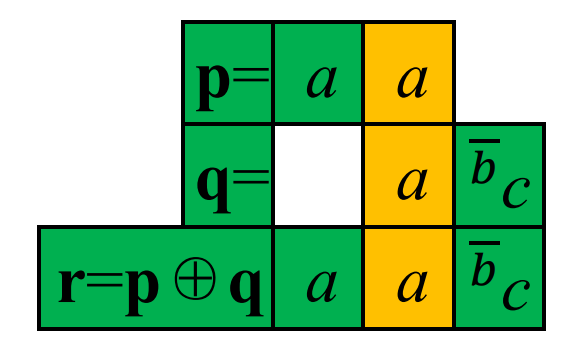}
		\caption{An illustrative example of pattern join strategy with negative patterns}
		\label{fig3}
	\end{figure}

{Example \ref{exam11} illustrates the pattern join strategy with negative patterns has a better performance than other methods.}

\begin{example}\label{exam11}
In Example \ref{exam8}, there are five frequent patterns with a length of two: a[0,1]a, $a[0,1]^{\overline {a}}a$, a[0,1]c, $a[0,1]^{\overline {b}}c$, and $a[0,1]^{\overline {c}}c$. Using the pattern join strategy with negative patterns, ten candidate patterns are generated: a[0,1]a[0,1]a, a[0,1]a[0,1]$^{\overline {a}}$a, a[0,1]a[0,1]c, a[0,1]a[0,1]$^{\overline {b}}$c, a[0,1]a[0,1]$^{\overline {c}}$c, a[0,1]$^{\overline {a}}$a[0,1]a, a[0,1]$^{\overline {a}}$a[0,1]$^{\overline {a}}$a, a[0,1]$^{\overline {a}}$a[0,1]c, a[0,1]$^{\overline {a}}$a[0,1]$^{\overline {b}}$c, and a[0,1] $^{\overline {a}}$a[0,1]$^{\overline {c}}$c.

{ For the enumeration tree strategy, one frequent pattern $\mathbf p$ can generate $|\Sigma|$ positive candidate patterns $\mathbf t = \mathbf p[a,b]q_{m}$ $(q_{m}\in\Sigma)$. Moreover, each positive candidate pattern $\mathbf t$ can further generate $|\Sigma|$ negative candidate patterns  $\mathbf x = \mathbf p[a,b]^{\overline {f_{m-1}}} q_{m}$ $(f_{m-1}\in\Sigma)$. Thus, one frequent pattern $\mathbf p$ can generate $|\Sigma|$ positive candidate patterns and $|\Sigma|\times |\Sigma|$ negative candidate patterns.  In Example \ref{exam8}, there are three items in $\Sigma$, and five frequent patterns with a length of two. Therefore, $5\times(3+3\times3)$=60 positive and negative candidate patterns with a length of three are generated by the enumeration tree strategy.}

{The classical negative SPM algorithms usually generate positive candidate patterns at first. Two positive candidate patterns with a length of three in Example \ref{exam8}  are generated: a[0,1]a[0,1]a and a[0,1]a[0,1]c. Then, negative candidate patterns are generated based on positive patterns. For each gap constraint, $|\Sigma|$ negative candidate patterns can be generated. Therefore, there are $3\times3$=9 negative candidate patterns for each positive candidate pattern with a length of three. There are two gap constraints in a pattern with a length of three.   Hence, this kind of method generates 2+$2\times 3\times 3$=20 positive and negative candidate patterns with a length of three in Example \ref{exam8}.}

 {In summary, the pattern join strategy with negative patterns is superior to the enumeration tree strategy and the classical negative SPM algorithms, since the pattern join strategy with negative patterns generates ten candidate patterns, while the other two methods generate 60 and 20 candidate patterns, respectively. }
\end{example}

\subsection{Support calculation} \label {sect4.2}
In SPM with gap constraints, support calculation is the main time-consuming part. Although the Incomplete-Nettree structure proposed in  \cite{32_Wu2014} can be used to calculate the supports of candidate patterns with the same prefix pattern, this method requires scanning the whole sequence for each prefix pattern. To overcome this shortage, we propose a NegPair algorithm that employs a key-value array structure. The NegPair algorithm uses the key-value arrays of prefix and suffix subpatterns to calculate the supports of both positive and negative candidate patterns with gap constraints. Thus, NegPair does not need to scan the whole original sequence. Hence, the efficiency of the algorithm is improved. The structure of the key-value array is as follows.

Suppose $I = <i_{1}, i_{2} \ldots i_{m}>$ is an occurrence of $\mathbf p$ in $\mathbf s$. $i_{m}$ is the ending position of the occurrence. We use the key-value array structure to calculate the support. Each element in the key-value array has two fields: key and value.  The key is used to store the ending position of $\mathbf p$ in $\mathbf s$, and the value is used to store the number of occurrences with the same ending position. Example \ref {exam12} illustrates the key-value array.
 
\begin{example}\label{exam12}
Suppose we have a pattern $\mathbf t$ = a[0,1]a[0,1]$^{\overline {b}}$c and a sequence $\mathbf s_{1}$ = baacaac. There are three occurrences of $\mathbf t$ in $\mathbf s_{1}$ shown in Figure \ref{fig2}. We know that the key-value pair array $A$ of $\mathbf t$ in $\mathbf s_{1}$ has two elements, $A[1]$ and $A[2]$, which is shown in the right part of Figure \ref{fig4}. $A[1].key = 4$, since the ending position of $I_{1} = <2,3,4>$ is 4. Hence, $A[1].value = 1$. Similarly, $A[2].key = 7$, since the end positions of $I_{2} = <3,5,7>$ and $I_{3} = <5,6,7>$ are both 7, and $A[2].value = 2$. Thus, the sum of the values of all elements in $A$ is the support of $\mathbf t$ in $\mathbf s_{1}$. For example, $A[1].value+A[2].value = 1+2 = 3$ is the support of pattern $\mathbf t$ in $\mathbf s_{1}$. 

\begin{figure}[h]
	\centering
	\includegraphics[width=0.45\textwidth]{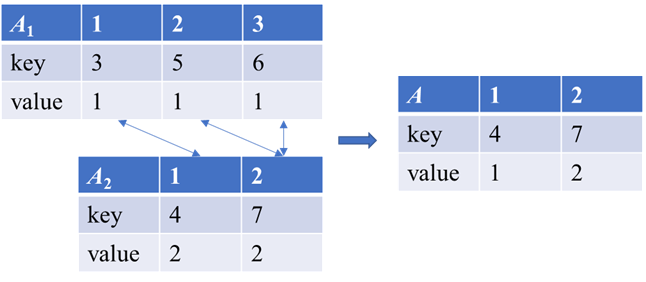}
\caption{Key-value pair arrays $A_1$, $A_2$, and $A$ of patterns  $\mathbf p$ =a[0,1]a, $\mathbf q$=a[0,1](¬b)c, and $\mathbf t$=a[0,1]a[0,1](¬b)c in sequence $\mathbf s_{1}$, respectively.}
\label{fig4}
\end{figure}
\end{example}

The principle of the NegPair algorithm is shown as follows. 
Suppose patterns $\mathbf p$ and $\mathbf q$ can generate superpattern $\mathbf t$. The key-value pair arrays of $\mathbf p$, $\mathbf q$, and $\mathbf t$ in $\mathbf s_{1}$ are $A_{1}$, $A_{2}$, and $A$, respectively.  Thus, NegPair uses $A_{1}$ and $A_{2}$ to generate $A$. We first consider whether $A_{2}[k].key$ and $A_{1}[j].key$ satisfy the gap constraint or not. If they satisfy the gap constraint, i.e., $M \leq A_{2}[k].key-A_{1}[j].key-1 \leq N$, then we further consider the following case. If $e_{m-1} = null$ or $e_{m-1} \neq null$ and there is no item $e_{m-1}$ between $A_{1}[j].key$ and $A_{2}[k].key$, then there are two possible cases, where  $e_{m-1}$ is the last negative item of $\mathbf t$.

Case 1: $A[i]$ does not exist in $A$. In this case, $A[i].key = A_{2}[k].key$ and $A[i].value = A_{1}[j].value$.

Case 2: $A[i]$ already exists in $A$. In this case, $A[i].value += A_{1}[j].value$.

Example \ref{exam13} illustrates the principle of NegPair.

	\begin{example}\label{exam13}

Suppose we have a sequence $\mathbf s_{1}$ = baacaac and two patterns $\mathbf p$ = a[0,1]a and $\mathbf q$ = a[0,1]$^{\overline {b}}$c, whose key-value pair arrays $A_{1}$ and $A_{2}$ are shown in the left part of Figure \ref{fig4}. Now, we will illustrate the principle of generating $A$ for superpattern $\mathbf t$ = a[0,1]a[0,1]$^{\overline {b}}$c.

According to $A_{1}$ and $A_{2}$, we know that $A_{1}[1].key = 3$ and $A_{2}[1].key = 4$. Thus, $A[1].key = A_{2}[1].key = 4$ and $A[1].value = A_{1}[1].value = 1$, since $4-3-1 = 0$ satisfies the gap constraints [0,1], and there is no item between $d_{3}$ and $d_{4}$. We know that $A_{1}[2].key = 5$ and $A_{2}[2].key = 7$ satisfy the gap constraints and there is no ``b'' between $d_{5}$ and $d_{7}$. Therefore, $A[2].key = A_{2}[2].key = 7$ and $A[2].value = A_{1}[2].value = 1$. $A[2].value = 1+A_{1}[3].value = 2$, since $A_{1}[3].key = 6$ and $A_{2}[2].key = 7$ which satisfy the gap constraints, and there is no item between $d_{6}$ and $d_{7}$. The result of $A$ is shown in the right part of Figure \ref{fig4}.
\end{example}

NegPair is illustrated in Algorithm \ref{alg1}. 

\begin{algorithm}  
\footnotesize
\caption{NegPair} \label{alg1}
\begin{algorithmic}[1]   
 \leftline{{\bf Input:} Patterns $\mathbf p$ and $\mathbf q$, their corresponding arrays $A_{1}$ and $A_{2}$, sequence $\mathbf s$, and gap constraint $[a,b]$} 
 \leftline{{\bf Output:} Pattern $\mathbf t$ and its corresponding array $A$} 
\IF {$Suffix(\mathbf p)=Prefix(\mathbf q)$}
    \STATE $\mathbf t\leftarrow \mathbf p\oplus\mathbf q$;
    \STATE $i\leftarrow1,j\leftarrow1,k\leftarrow1$;
    \WHILE {$k\leq$  $A_{2}.$size or $j\leq$ $A_{1}.size$}
        \IF {$M\leq A_{2}[k].key-A_{1}[j].key-1\leq N$}
        \IF {$e_{m-1}=null$ or $e_{m-1}\neq null$ and there is no item $e_{m-1}$ between $A_{1}[j].key$ and $A_{2}[k].key$}
            \IF {$A[i].key=A_{2}[k].key$}
                \STATE $A[i].value\leftarrow A[i].value +A_{1}[j].value$;
            \ELSE
                \STATE $A[i].key\leftarrow A_{2}[k].key, A[i].value\leftarrow A_{1}[j].value, i\leftarrow i+1$;
	       \ENDIF
            \ENDIF
		\STATE $j\leftarrow j+1$;
	\ELSIF {$A_{2}\lbrack k\rbrack.key-A_{1}\lbrack j\rbrack.key-1<M$}
		\STATE  {$k\leftarrow k+1$;}  
	\ELSE
		\STATE $j\leftarrow j+1$;
	\ENDIF
    \ENDWHILE
\ENDIF\\
\STATE return $\mathbf t$ and $A$;
\end{algorithmic}  
\end{algorithm}

{The NegPair algorithm first checks whether \textit{Suffix}(\textbf{p}) is the same as \textit{Prefix}(\textbf{q}) (Line 1). If they are the same, then NegPair generates a superpattern pattern \textbf{t}=\textbf{p} $\oplus$ \textbf{q} (Line 2) and uses $j$ and $k$ to point to the positions of the key-value pair arrays $A_1$ and $A_2$, respectively (Line 3). When $j$ does not point to the end of $A_1$ or $k$ does not point to the end of $A_2$, the calculation begins (Line 4). If $A_2[k]$.key and $A_1[j]$.key satisfy the gap constraint (Line 5) and the negative item constraint is satisfied (Line 6), then there are two cases to update the key and the value of the key-value pair array $A$ (Lines 7 to 11). After that, $j$ is incremented by 1 (Line 13). If $A_2[k]$.key-$A_1[j]$.key-1 is smaller than $M$, then we move to the next element of \textbf{q}, i.e., $k$ is incremented by 1 (Lines 14 to 15); Otherwise, we move to the next element of \textbf{p}, i.e., j is incremented by 1 (Line 17). Finally, superpattern \textbf{t} and its key-value pair array $A$ are obtained (Line 21).}

\subsection{NSPG-Miner algorithm} \label {sect4.3}
The NSPG-Miner algorithm has four steps.

Step 1: Scan database $SDB$ to get all the frequent items and store them in $F_{1}$. 

Step 2: NSPG2 is proposed, which is used to mine NSPGs with a length of two. NSPG2 does the following work.

An enumeration tree strategy is used to generate all candidate patterns with a length of two: $\mathbf t = p_{1}[a,b]p_{2}$, where $p_{1}$ and $p_{2}$ are frequent items. Then, the support of pattern $\mathbf t$ is calculated using NegPair. If $\mathbf t$ is a frequent pattern, then $\mathbf t$ is stored in $F_{2}$ and each negative item $\overline {e} (e\in\Sigma)$ is inserted between the positive items $p_{1}$ and $p_{2}$ to generate a new candidate pattern $\mathbf t_{1} = p_{1}[a,b]^{\overline {e}} p_{2}$. If $\mathbf t_{1}$ is also a frequent pattern, then it is stored in $F_{2}$. The NSPG2 algorithm is shown in Algorithm \ref{alg2}.

\begin{algorithm}
\footnotesize
\caption{{NSPG2}}	\label{alg2}
\leftline{ {\bf Input:} Sequence database $SDB$, support threshold $\rho$, gap constraint $[a,b]$, and frequent pattern set $F_{1}$}
\leftline{ {\bf Output:} Frequent pattern set $F_{2}$}
\begin{algorithmic}[1]
\FOR {each item $p$ in $F_1$}
    \FOR { each item $q$ in $F_1$}
        \STATE $C_{2}$. add ($p[a,b]q$); //Use enumeration tree strategy to generate candidate pattern $p[a,b]q$
    \ENDFOR
\ENDFOR
\FOR {each $\mathbf t\leftarrow p_{1}[a,b]p_{2}$ in $C_{2}$}
    \STATE Calculate the support and $A$ of $\mathbf t$ using the NegPair Algorithm;
    \IF {$\mathbf t.sup \geq \rho\times L\times W$}
	\STATE Store $\mathbf t$ and its $A$ in $F_{2}$;
	\FOR {each item $e$ in $\Sigma$}
		\STATE $\mathbf t_{1}\leftarrow p_{1}[a,b]^{\overline {e}}p_{2}$;
		\STATE Calculate the support and $A_{1}$ of $\mathbf t_{1}$ using the NegPair Algorithm;
		\IF {$\mathbf t_{1}.sup \geq  \rho\times L\times W$}
			\STATE Store $\mathbf t_{1}$ and its $A_{1}$ in $F_{2}$;
		\ENDIF
	\ENDFOR
    \ENDIF
\ENDFOR
\STATE return $F_{2}$;
\end{algorithmic}
\end{algorithm}

{The NSPG2 algorithm first uses the enumeration tree strategy to generate all candidate patterns with a length of two and store them in $C_2$ (Lines 1 to 5). For each candidate pattern \textbf{t}= $p_1[a,b]p_2$ in $C_2$ (Line 6), NSPG2 uses the NegPair algorithm to calculate \textbf{t} and its key-value pair array $A$ (Line 7). If \textbf{t} is a frequent pattern (Line 8), then NegPair stores \textbf{t} and its $A$ in $F_2$ (Line 9). Moreover, each negative item $e$ is inserted between the positive items $p_1$ and $p_2$ to generate a new candidate pattern \textbf{t}$_1$ (Line 11). NSPG2 uses the NegPair algorithm to calculate the support of \textbf{t}$_1$ (Line 12). If \textbf{t}$_1$ is a frequent negative pattern (Line 13), then NegPair stores \textbf{t}$_1$ and its $A_1$ in $F_2$. Hence, all frequent positive and negative patterns with length two are stored in $F_2$.}
	
 Step 3:  NSPGm is proposed, which mines NSPGs with a length of $m+1$, where $m \geq 2$. NSPGm does the following work.
 
For any two patterns $\mathbf p$ and $\mathbf q$ in $F_{m}$, if $Suffix(\mathbf p)=Prefix(\mathbf q)$, then candidate pattern $\mathbf t$ is generated, and the support rate of $\mathbf t$ in $SDB$ is calculated using NegPair. If the support rate of $\mathbf t$ is greater than $\rho$, then $\mathbf t$ is stored in $F_{m+1}$. The NSPGm algorithm is shown in Algorithm \ref{alg3}.
 
\begin{algorithm}
\footnotesize
\caption{NSPGm}	\label{alg3}
\leftline{ {\bf Input:} 
		Sequence database $SDB$, support threshold $\rho$, gap constraint $[a,b]$, frequent pattern set $F_{m}$}
\leftline{ {\bf Output:} Frequent pattern set $F_{m+1}$}
\begin{algorithmic}[1]
\STATE $offsup\leftarrow L\times W^{m}$;
\FOR {each $\mathbf p$ in $F_{m}$}
    \FOR {each $\mathbf q$ in $F_{m}$}
	\IF{$Suffix(\mathbf p)=Prefix(\mathbf q)$}
		\STATE $sup\leftarrow0$;
    	\FOR {each $\mathbf {s}_{k}$ in $SDB$}
			\STATE $\mathbf t, A\leftarrow NegPair(\mathbf p, \mathbf q, A_{1}, A_{2}, \mathbf s_{k}, [a,b])$;
			\STATE $sup\leftarrow sup+$the sum of $A.value$;
		\ENDFOR
		\IF{$sup/offsup \geq \rho$}
			\STATE Store $\mathbf t$ and its $A$ in $F_{m+1}$;
		\ENDIF
	\ENDIF
    \ENDFOR
\ENDFOR
\STATE return $F_{m+1}$
\end{algorithmic}
\end{algorithm}

{The NSPGm algorithm first calculates  \textit {offsup}=$L\times W^m$ (Line 1). For any two patterns \textbf{p} and \textbf{q} in $F_m$ (Lines 2 and 3), if \textit{Suffix}(\textbf{p})=\textit{Prefix}(\textbf{q}) (Line 4), then NSPGm adopts NegPair to calculate the support of superpattern \textbf{t}=\textbf{p} $\oplus$ \textbf{q} (Lines 5 to 9). If pattern \textbf{t} is frequent, then \textbf{t} and its $A$ are stored in $F_{m+1}$ (Lines 10 to 12). Hence, all frequent positive and negative patterns with length $m$+1 are stored in $F_{m+1}$.}

Step 4: Iterate Step 3 until $F_{m}$ is empty.
 
{ Example \ref {exam14} illustrates the steps of NSPG-Miner.}

\begin{example}\label{exam14}
{In this example, the sequence database $SDB$ is the same as Example \ref {exam2}, i.e., $SDB=\{\mathbf s_{1}, \mathbf s_{2}\}$, $\mathbf s_{1} =$ baacaac, and $\mathbf s_{2} = $ ababccbb. Moreover, the gap constraint is [0,1] and  $\rho = 0.13$, which are the same as Example \ref {exam8}.}

{According to Step 1, we know that the frequent items are $F_{1}$=\{$a,b,c$\}.}

{According to Step 2, NSPG-Miner uses NSPG2 to mine NSPG2 with a length of two. According to Lines 1 to 5 in NSPG2, NSPG-Miner generates nine candidate patterns with length two: a[0,1]a, a[0,1]b, a[0,1]c, b[0,1]a, b[0,1]b, b[0,1]c, c[0,1]a, c[0,1]b, c[0,1]c. Taking a[0,1]a as an example, according to Example \ref {exam13}, we know that $sup(a[0,1]a, \mathbf s_{1})$=3. It is easy to know that $sup(a[0,1]a, \mathbf s_{2})$=1. Thus, $sup(a[0,1]a, SDB)$=3+1=4. We know the lengths of  $\mathbf s_{1}$ and $\mathbf s_{2}$ are 7 and 8, respectively, i.e., $L$=7+8=15. Moreover, $W$=2, since gap constraint is [0,1] and 1-0+1=2. Therefore, $\rho \times L \times W$=$0.13 \times 15\times 2$=3.9. Hence, a[0,1]a is a frequent pattern and is stored in $F_2$. According to Lines 6 to 12 in NSPG2, NSPG-Miner further calculates the supports of a[0,1]$^{\overline {a}}$a, a[0,1]$^{\overline {b}}$a, and a[0,1]$^{\overline {c}}$a, since $\Sigma$=\{a,b,c\}. Furthermore, we know that a[0,1]$^{\overline {a}}$a is a frequent negative pattern and is stored in $F_2$. Similarly, we know that a[0,1]$^{\overline {b}}$a and a[0,1]$^{\overline {c}}$a are not frequent. Moreover, we know that patterns a[0,1]c, a[0,1]$^{\overline {b}}$c, and a[0,1]$^{\overline {c}}$c are frequent. }

{According to Step 3, NSPG-Miner generates candidate patterns with a length of three using the NSPGm algorithm. Based on the five frequent patterns: a[0,1]a, a[0,1]$^{\overline {a}}$a, a[0,1]c, a[0,1]$^{\overline {b}}$c, and a[0,1]$^{\overline {c}}$c, we know that only ten candidate patterns are generated using the pattern join strategy with negative patterns, which is superior to the enumeration tree strategy and the classical negative SPM algorithms according to Example \ref {exam11}. Moreover, we use NegPair to calculate the support of each candidate pattern, and they are infrequent. Thus, $F_3$ is empty. }

{Hence, the algorithm ends according to Step 4, and the results are the same as that of Example \ref {exam11}.}

\end{example}

{The pseudo-code of NSPG-Miner is shown in Algorithm \ref{alg4}.}
 
\begin{algorithm}
\footnotesize
\caption{{NSPG-Miner}}\label{alg4}
\leftline{{\bf Input:} 	Sequence database $SDB$, support threshold $\rho$, and gap constraint $[a,b]$}
\leftline{{\bf Output:} 		Frequent pattern set $F$}
\begin{algorithmic}[1]
\FOR {each sequence $\textbf{s}_c$ in SDB}
    \FOR {each item $d_j$ in $\textbf{s}_c$}
        \STATE $A[c][d_j].key.add (j)$;
        \STATE $A[c][d_j].value.add (1)$;
    \ENDFOR
\ENDFOR
\FOR {each sequence $\textbf{s}_c$ in SDB}
    \FOR {each item $p$ in $\Sigma$}
        \STATE  $sup[p] \leftarrow  sup[p]+ A[c][d_j].size$;
    \ENDFOR
\ENDFOR
\FOR {each item $p$ in $\Sigma$}
    \IF {$sup[p]$ $\geq \rho \times$ $L$ }
        \STATE $F_1$.add ($p$);   // Item $p$ is a frequent item
    \ENDIF
\ENDFOR
\STATE $F_2\leftarrow NSPG2(SDB, \rho, [a,b], F_{1})$;
\STATE $m\leftarrow2$;
\WHILE {$F_{m}\neq null$}
    \STATE $F_{m+1}\leftarrow NSPGm(SDB, \rho, [a,b], F_{m})$;
    \STATE $m\leftarrow m+1$;
\ENDWHILE 
\STATE return $F\leftarrow F_{1}\cup F_{2}\cup\ldots\cup F_{m}$;
\end{algorithmic}
\end{algorithm}

{The NSPG-Miner algorithm first traverses SDB (Lines 1 to 6) and stores all frequent patterns in set $F_1$ (Lines 7 to 16). Then NSPG-Miner adopts NSPG2 to discover all frequent patterns with length two and stores them in $F_2$ (Line 17). NSPG-Miner sets $m$=2. While $F_m$ is not NULL, NSPG-Miner employs NSPGm to discover all frequent patterns with length $m$+1 and stores them in $F_{m+1}$ (Lines 19 to 22). Hence, all frequent patterns are stored in $F_{1}\cup F_{2}\cup\ldots\cup F_{m}$. }

 {Now, we will analyze the time and space complexities. }

\begin{theorem} \label{theorem2}
The time complexity of NSPG-Miner is $O(\frac{D\times L\times W}{|\Sigma|}+L)$, where $D$ is the total number of candidate patterns, $L$ is the length of  $SDB$, and $W=N-M+1$. 
\end {theorem}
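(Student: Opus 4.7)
The plan is to split the running time of NSPG-Miner into two disjoint contributions: (i) the one-time preprocessing phase (lines 1--16 of Algorithm~\ref{alg4}) that builds the length-one key-value arrays and selects the frequent single items, which will account for the additive $O(L)$ term; and (ii) all support computations for candidate patterns of length at least two, which are carried out exclusively through NegPair calls inside NSPG2 and NSPGm, and which will account for the $O(DLW/|\Sigma|)$ term. The two contributions are added to match the claim.

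Phase (i) is a direct two-pass linear scan of $SDB$: the inner loop inserts one key-value entry per symbol of each sequence in $O(1)$ time, and a subsequent loop sums the per-item sizes to test them against the threshold. Both sweeps visit each position of $SDB$ a constant number of times, so phase (i) runs in $O(L)$.

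For phase (ii), I would first bound one NegPair call. Its main \textbf{while} loop (lines 4--19 of Algorithm~\ref{alg1}) strictly advances $j$ or $k$ in every branch, so the total iteration count is at most $|A_1|+|A_2|$. Inside one iteration every step is $O(1)$ except the negative-item check on line 6, which scans the symbols of $\mathbf{s}$ strictly between positions $A_1[j].key$ and $A_2[k].key$; under the gap constraint those positions are separated by at most $W$ indices, so the check costs $O(W)$. Hence a single NegPair call takes $O\bigl((|A_1|+|A_2|)\,W\bigr)$ time. To aggregate across patterns I would use the fact that an entry of a key-value array records a distinct ending position in $SDB$ of the pattern's terminal symbol, and that the $L$ positions of $SDB$ are partitioned across the $|\Sigma|$ possible terminal symbols, giving length-one arrays of average size $L/|\Sigma|$; since extending a pattern to the left can only filter ending positions, the arrays of longer patterns inherit the same $O(L/|\Sigma|)$ bound. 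Summing $O\bigl(L\,W/|\Sigma|\bigr)$ over the $D$ candidate patterns yields $O(DLW/|\Sigma|)$, and adding phase (i) gives the stated complexity.

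The main obstacle I anticipate is making the $L/|\Sigma|$ array-size bound rigorous, because a sequence dominated by a single symbol can blow one array up to $\Theta(L)$. I would address this by stating explicitly the uniform-symbol assumption standard in analyses of repetitive SPM (e.g.\ \cite{12_Min2020, 32_Wu2014}); under this assumption the identity $\sum_{e \in \Sigma} |A_e| = L$ for length-one arrays translates directly into an average array size of $L/|\Sigma|$, which is exactly what the theorem encodes. Without that assumption one could only derive the weaker worst-case bound $O(DLW + L)$, so the theorem is best read as an average-case statement and should be presented as such.
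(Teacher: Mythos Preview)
Your decomposition mirrors the paper's: an $O(L)$ preprocessing scan followed by $D$ NegPair calls, each bounded by $O(LW/|\Sigma|)$, and then summed. The one substantive difference is the origin of the factor $W$ inside the NegPair bound. You obtain it from the negative-item check on line~6, which inspects at most $W$ positions of $\mathbf{s}$ between $A_1[j].key$ and $A_2[k].key$. The paper instead attributes $W$ to a backtracking of the pointer $j$ by $N-M+1$ positions whenever $k$ is incremented on line~15---a detail not visible in the printed pseudocode but evidently part of the intended implementation (and in fact needed for correctness, since without it an $A_1[j]$ that matches several consecutive $A_2[k]$ entries would be skipped). Under the paper's accounting the iteration count is therefore $O(|A_1|+W\cdot|A_2|)$ rather than $|A_1|+|A_2|$, so your claim that the loop ``strictly advances $j$ or $k$ in every branch'' is faithful to the pseudocode as typeset but not to the algorithm the paper is actually analysing. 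Both derivations land on the same per-call bound $O(LW/|\Sigma|)$, and your explicit acknowledgement that the $L/|\Sigma|$ array-size estimate is an average-case assumption is more candid than the paper's passing phrase ``on average''.
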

\begin {proof}
First, we analyze the time complexity of NegPair, which is $O(\frac{L\times W}{|\Sigma|})$. The reason is as follows. NegPair needs to traverse the key-value pair arrays of $\mathbf p$ and $\mathbf q$. The sizes of the key-value pair arrays of  $\mathbf p$ and $\mathbf q$ are both $O(\frac{L}{|\Sigma|})$ on average, since the length of $SDB$ is $L$. When $k$ is updated, $j$ has to backtrack $N-M+1$ positions in Line 15 in Algorithm \ref{alg1}. Now, we analyze the complexity of NSPG-Miner. Since the length of $SDB$ is $L$, the time complexity of scanning the database $SDB$ to get all the frequent items is $O(L)$. We know that both Steps 2 and 3 use NegPair to calculate the support. There are $D$ candidate patterns, which means that NegPair runs $D$ times. Therefore, the time complexity of NSPG-Miner is $O(\frac{D\times L\times W}{|\Sigma|}+L)$.
\end {proof}

\begin{theorem}\label{theorem3}
 The space complexity of NSPG-Miner algorithm is $O(\frac{D\times L}{|\Sigma|})$.
\end {theorem}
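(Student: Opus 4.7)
The plan is to bound the space usage of NSPG-Miner by summing the contributions of its three main components: the input sequence database, the key-value pair arrays associated with each (candidate) pattern, and the frequent-pattern set itself. Since the stated bound is $O\!\left(\frac{D\times L}{|\Sigma|}\right)$, the dominant term will have to come from the key-value pair arrays, so most of the argument will be about bounding their combined size.

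First, I would note that storing $SDB$ costs $O(L)$ space, and storing the pattern strings for the at most $D$ candidates costs $O(D \cdot m^\ast)$ where $m^\ast$ is the maximum pattern length; both are dominated by $O(D \cdot L/|\Sigma|)$ under the usual assumption that patterns are much shorter than $L$. Next I would analyze the per-pattern key-value pair array. By Definition of the array (Example~\ref{exam12}), each entry of $A$ corresponds to a distinct ending position $i_m$ of an occurrence of $\mathbf p$ in some $\mathbf s$, i.e.\ a position where $d_{i_m}=p_m$. Hence the size of $A$ for a pattern $\mathbf p$ with last positive item $p_m$ is at most the number of occurrences of the symbol $p_m$ in $SDB$, and summed across $SDB$ this is on average $L/|\Sigma|$, mirroring the reasoning used in the proof of Theorem~\ref{theorem2}.

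Then I would multiply by the number of patterns whose arrays can coexist in memory. Because NSPG-Miner keeps the arrays of $F_m$ alive while expanding to $F_{m+1}$ (Algorithm~\ref{alg3} reads $A_1, A_2$ from $F_m$), the worst case is that we simultaneously retain arrays for all frequent patterns encountered; bounding the number of such arrays by the total candidate count $D$ gives a global bound of $D \cdot O(L/|\Sigma|) = O(D \cdot L/|\Sigma|)$. Adding the $O(L)$ term for the database and the $O(D)$ term for the pattern descriptors is absorbed into the same bound, which yields the theorem.

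The main obstacle will be justifying the $L/|\Sigma|$ average-case bound on the key-value array size, since in the worst case a single symbol could dominate the alphabet distribution in $SDB$ and inflate the array. I would handle this the same way Theorem~\ref{theorem2} does -- by assuming a roughly uniform distribution of items so that each $p \in \Sigma$ occurs $\Theta(L/|\Sigma|)$ times -- and explicitly flag that this is the same averaging assumption already adopted for the time bound, so the two theorems are consistent. A secondary subtlety is that counting \emph{candidate} (not just frequent) patterns in the bound is slightly loose, but it is a legitimate upper bound and matches the statement of the theorem.
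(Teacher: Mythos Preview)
Your proposal is correct and follows essentially the same approach as the paper: decompose the memory into the candidate-pattern descriptors (contributing $O(D)$) and the key-value pair arrays (each of average size $O(L/|\Sigma|)$, for $D$ patterns in total), then add and simplify to $O(D\cdot L/|\Sigma|)$. You are in fact more careful than the paper---you additionally account for the $O(L)$ database storage and explicitly flag the uniform-distribution assumption behind the $L/|\Sigma|$ bound---but the underlying argument is the same.
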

\begin {proof}
In the NSPG-Miner algorithm, the memory usage consists of two parts: candidate patterns and all key-value pair arrays. The space complexity of candidate patterns is $O(D)$, since there are $D$ candidate patterns. The space complexity of all key-value pair arrays is $O(\frac{D\times L}{|\Sigma|})$, since the size of the key-value pair arrays is $O(\frac{L}{|\Sigma|})$. Hence, the space complexity of the NSPG-Miner algorithm is $O(D+\frac{D\times L}{|\Sigma|}) = O(\frac{D\times L}{|\Sigma|})$.
\end {proof}

\section{Experiments}\label{section5}
We will introduce the benchmark datasets and the competitive algorithms in Section \ref{sect5.1}. Section \ref{sect5.2} validates the efficiency of different strategies of the NSPG algorithm.  Section \ref{sect5.3} reports the comparisons between NSPG-Miner and the state-of-the-art PSPG mining algorithms. Section \ref{sect5.4} illustrates the comparisons between NSPG-Miner and the state-of-the-art negative sequential pattern mining algorithms. Section \ref {sectass} assesses the effect of varying the threshold and gap constraint on the number of patterns and performance. Section \ref {scalability} evaluates the scalability of  NSPG-Miner. Section \ref{sect5.5} applies NSPG-Miner to analyze biological sequences.

All experiments are conducted on a computer with an Intel(R)Core(TM)i7-8750U 2.20 GHz CPU, 16 GB of memory, and a 64-bit Windows 10 operating system. The programming environment is Visual Studio 2015.  Algorithms and datasets can be downloaded from https://github.com/wuc567/Pattern-Mining/tree/master/NSPG-Miner.

\subsection{Benchmark datasets and baseline methods}\label {sect5.1}
{To validate the performance of the NSPG-Miner algorithm, 11 datasets are selected from two categories:  DNA sequences and virus sequences, shown in Table \ref{tab2} \footnote {All datasets and algorithms can be downloaded from https://github.com/wuc567/Pattern-Mining/tree/master/NSPG-Miner. }.}

\begin{table}[ht]
\centering
\scriptsize 
\caption{{Summary of benchmark datasets}}	\label{tab2}
\begin{tabular}{cccc}	
\hline	
{Name} &{Type}  & {Source} &  {Length }\\\hline
DNA1\footnote[2] & DNA  & Homo Sapiens AL158070  & 6,000  \\
DNA2  & DNA & Homo Sapiens AL158070   & 8,000 \\
DNA3  & DNA  & Homo Sapiens AL158070   & 10,000 \\
DNA4  & DNA  & Homo Sapiens AL158070   & 12,000 \\
DNA5  & DNA  & Homo Sapiens AL158070   & 14,000 \\
DNA6  & DNA  & Homo Sapiens AL158070   & 16,000 \\
Virus1 \footnote[3]   & Virus  & Reston Ebola virus   & 18,891 \\
Virus2   & Virus  & Potato virus Y Wilga MV99   & 9,699 \\
HIV \footnote[4]  & Virus & HIV protease &  10,000 \\
SARS-1 \footnote[5]   & Virus  & Coronavirus (SARS)   & 29,751 \\
SARS-2   & Virus  & Coronavirus 2(COVID-19)   & 29,903 \\
\hline
\end{tabular}
\end{table}

\footnotetext[2]{The DNA1$-$6 datasets  can be downloaded from https://www.ncbi.nlm.nih.gov/nuccore/AL158070.11.}

\footnotetext[3] {The Virus1 and Virus2 datasets  can be downloaded from https://www.ebi.ac.uk/ena/data/view/Taxon:129003 and https://www.ebi.ac.uk/ena/data/view/Taxon:1107954, respectively.}

\footnotetext[4] {The HIV dataset can be downloaded from http://archive.ics.uci.edu/ml/machine-learning-databases/00330.}
\footnotetext[5]{The SARS-1 and SARS-2 datasets  can be downloaded from https://www.ncbi.nlm.nih.gov/nuccore/MN908947 and https://www.ncbi.nlm.nih.gov/nuccore/30271926, respectively.  }

To evaluate the performance of the NSPG-Miner algorithm, 11 algorithms in eight categories are selected as competitive algorithms. Brief descriptions of these algorithms are as follows.
  
\begin{enumerate}

\item NSPG-bf and NSPG-df: To analyze the effect of the pattern join strategy with negative patterns, we propose two variations of the NSPG-Miner algorithm called NSPG-bf and NSPG-df, which adopt queue and stack structures to generate candidate patterns using breadth-first and depth-first strategies with the enumeration tree strategy, respectively. %
Moreover, to calculate the support, NSPG-bf and NSPG-df adopt NegPair, which uses prefix and suffix patterns and their corresponding key-value arrays to calculate the support of superpatterns. In the NegPair implementation of NSPG-bf and NSPG-df, the key-value array of the prefix pattern is stored in queue and stack structures, respectively, and the key-value array of suffix pattern is the position index of each item. 

\item NSPG-intree: To analyze the effect of using NegPair for pattern matching, we propose NSPG-intree as a competitive algorithm, which uses the incomplete Nettree structure proposed in~\cite{32_Wu2014} to calculate the support. Unlike the pattern matching method of NSPG-Miner, NSPG-intree has to scan the database to calculate the supports of superpatterns.

\item NSPG-Pos: To compare the performance with positive pattern mining algorithms, we propose NSPG-Pos which only mines PSPGs. NSPG-Pos does not generate negative candidate patterns with a length of two, which means that NSPG-Pos removes the code from Lines 10 to 16 in the NSPG2 algorithm. 

\item NSPG-like: To analyze the effect of the Apriori property, we propose NSPG-like as a competitive algorithm. NSPG-like uses the method of~\cite{32_Wu2014} to calculate the number of offset sequences and uses an Apriori like property to prune candidates.

\item MAPB and MAPD~\cite{32_Wu2014}:  To report the differences between PSPG mining and NSPG mining, and to validate the efficiency of NSPG-Miner, we select MAPB and MAPD as the competitive algorithms, which are traditional algorithms that can only mine PSPGs.

\item ITM~\cite{34_Wang2019}:  To compare PSPG mining and NSPG mining, we use ITM as a competitive algorithm, which can only mine PSPGs.

\item e-NSP~\cite{41_Cao2016} and e-RNSP~\cite{43_Dong2020}: To verify the efficiency of NSPG-Miner for NSP mining, e-NSP and e-RNSP are adopted as competitive algorithms. e-NSP and e-RNSP are algorithms for mining NSPs and repetitive NSPs, respectively. 

\item ONP-Miner \cite {onpminer}: To evaluate the mining performance of NSPG-Miner, ONP-Miner is selected as a competitive algorithm, which can mine positive and negative patterns with gap constraints under the one-off condition. 

\end{enumerate}

	\subsection{Efficiency}\label {sect5.2}
{To validate the efficiency of different strategies of NSPG-Miner, we select four competitive algorithms: NSPG-bf, NSPG-df, NSPG-intree, and NSPG-like. Experiments are conducted on DNA1, DNA2, DNA3, DNA4, DNA5, DNA6, Virus1, and Virus2 datasets. All algorithms adopt the same parameters: $\rho = 0.01$ and gap constraint = [0,15]. All of the algorithms mine 137, 142, 144, 145, 151, 152, 133, and 134 patterns on DNA1, DNA2, DNA3, DNA4, DNA5, DNA6, Virus1, and Virus2, respectively. The comparisons of the running time, number of candidate patterns, and memory usage are shown in Figures \ref{fig7}, \ref{fig8}, and \ref{fig85m}, respectively.}
	\begin{figure}[h]
		\centering
		\includegraphics[width=0.45\textwidth]{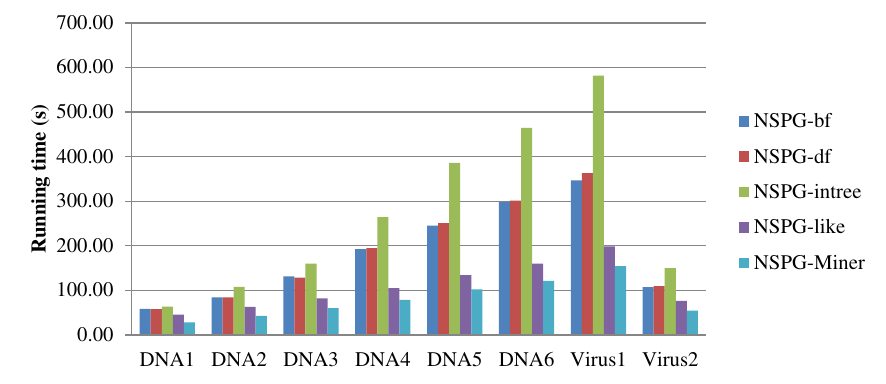}
		\caption{Comparison of running time}
		\label{fig7}
	\end{figure}
	\begin{figure}[!t]
		\centering
		\includegraphics[width=0.45\textwidth]{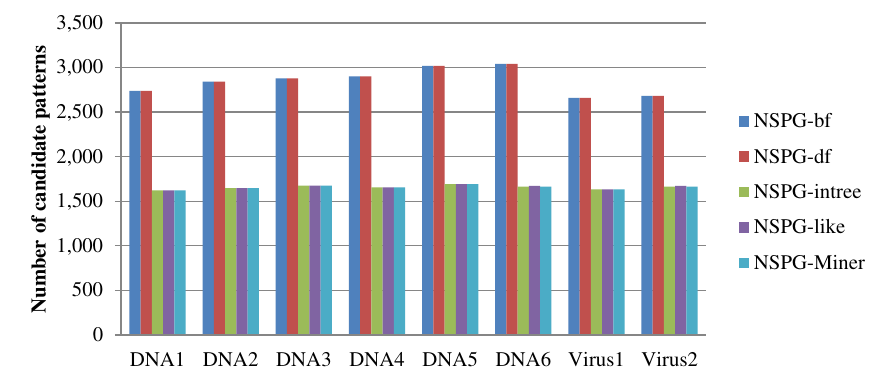}
		\caption{Comparison of number of candidate patterns}
		\label{fig8}
	\end{figure}

\begin{figure}[!t]
	\centering
	\includegraphics[width=0.45\textwidth]{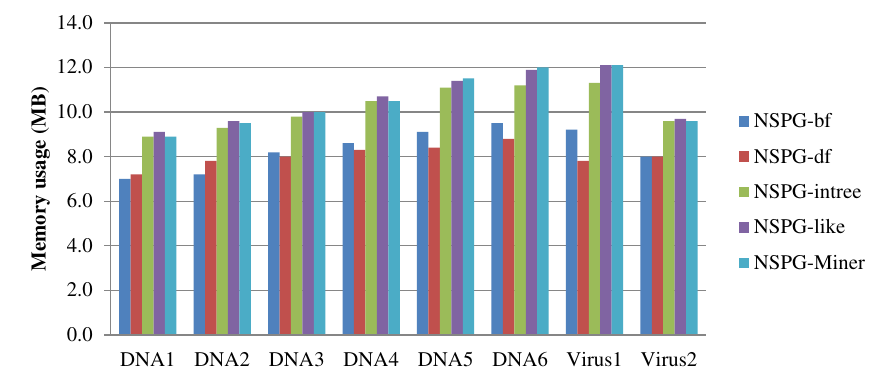}
	\caption{Comparison of memory usage}
	\label{fig85m}
\end{figure}

The results give rise to the following observations.

\begin{enumerate}
	
\item The pattern join strategy with negative patterns is more efficient than the enumeration tree strategy, since NSPG-Miner achieves a better performance than NSPG-bf and NSPG-df. From Figure \ref{fig7}, NSPG-Miner runs faster than NSPG-bf and NSPG-df. For example, on Virus1, NSPG-bf and NSPG-df take 365 s and 362 s, respectively, while NSPG-Miner only needs 217 s. The reason is as follows. To generate candidate patterns, NSPG-Miner uses the pattern join strategy with negative patterns, while NSPG-bf and NSPG-df use depth-first and breadth-first searches to realize the enumeration tree strategy, respectively. Figure \ref{fig8} shows that NSPG-Miner calculates 1,633 candidate patterns, while both NSPG-bf and NSPG-df calculate 2,660. This experimental result is consistent with that of Example \ref{exam11}. Hence, NSPG-Miner outperforms NSPG-bf and NSPG-df.

\item Adopting a key-value pair array for support calculation is more efficient than scanning the database. From Figure \ref{fig7}, NSPG-Miner runs faster than NSPG-intree. For example, on DNA5, NSPG-intree takes 385 s, while NSPG-Miner only needs 102 s. The reason is as follows. For pattern matching, NSPG-Miner adopts a key-value pair array, so it does not need to scan the database repeatedly, while NSPG-intree needs to scan the database repeatedly. Hence, NSPG-Miner outperforms NSPG-intree.

\item NSPG-Miner achieves a better performance than NSPG-like. From Figure \ref{fig7}, NSPG-Miner runs faster than NSPG-like. For example, on DNA3, NSPG-like takes about 82 s, while NSPG-Miner only needs 60 s. The reason is as follows.  NSPG-Miner only uses Lemma \ref{lemma1} to calculate the number of offset sequences, while NSPG-like takes more time to scan the database. Hence, NSPG-Miner outperforms NSPG-like.   

\item We notice that NSPG-Miner consumes more memory to mine patterns. For example, on DNA3, NSPG-Miner consumes 10 MB of memory, while NSPG-df consumes 8 MB. This phenomenon can also be observed on other datasets. The reason is as follows. To calculate the support, NSPG-Miner needs to store all key-value arrays of frequent patterns. However, NSPG-df only stores a part of the key-value arrays in the stack structure. Therefore, NSPG-Miner consumes more memory to improve the mining speed. 

\end{enumerate}

In summary, NSPG-Miner is equipped with many effective strategies.

	\subsection{Comparison with three state-of-the-art PSPG mining algorithms}\label {sect5.3}

 
{ITM \cite{34_Wang2019}, MAPB \cite{32_Wu2014}, and MAPD \cite{32_Wu2014} are three state-of-the-art algorithms that can mine the same PSPGs as NSPG-Miner. More importantly, NSPG-Miner can discover PSPGs and NSPGs at the same time. To achieve a fair comparison, NSPG-Pos, as a part of NSPG-Miner,  cannot generate negative candidate patterns. Thus, NSPG-Pos can only mine PSPGs. All algorithms are run on DNA1, DNA2, DNA3, DNA4, DNA5, DNA6, Virus1, and Virus2 datasets and adopt the same parameters: $\rho = 0.01$, and gap constraint = [0,15]. All algorithms discover 77, 79,	78,	78,	79,	79,	76, and	76 PSPGs on eight datasets, respectively. The comparisons of the running time, the number of candidate patterns, and the memory usage are shown in Figures \ref{fig9}, \ref{fig10}, and \ref{itm_mem}, respectively.} 

	\begin{figure}[h]
		\centering
		\includegraphics[width=0.45\textwidth]{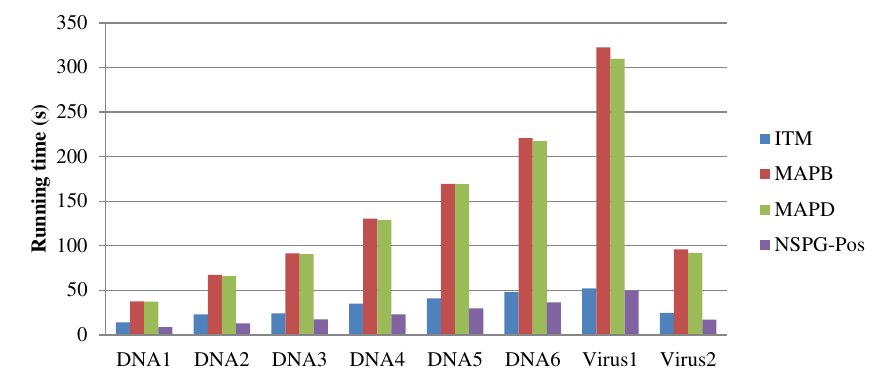}
		\caption{Comparison with three PSPG mining algorithms in terms of  running time}
		\label{fig9}
	\end{figure}
 
	\begin{figure}[h]
		\centering
		\includegraphics[width=0.45\textwidth]{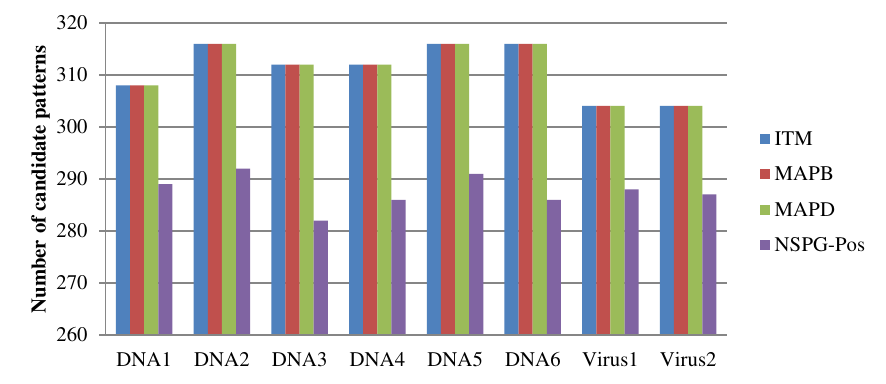}
		\caption{Comparison with  three PSPG mining algorithms in terms of the number of candidate patterns}
		\label{fig10}
	\end{figure}

\begin{figure}[h]
\centering
\includegraphics[width=0.45\textwidth]{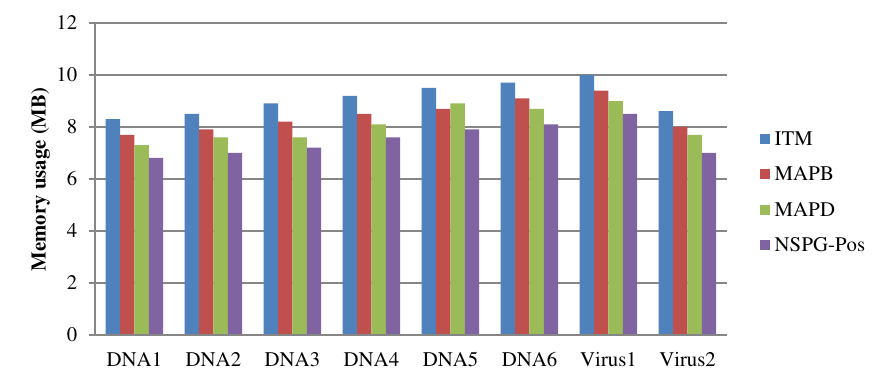}
\caption{Comparison with  three  PSPG mining algorithms in terms of  memory usage}
	\label{itm_mem}
\end{figure}
The results give rise to the following observations.

\begin{enumerate}
\item {NSPG-Pos achieves better performance than ITM, since NSPG-Pos runs faster and consumes less memory than ITM in all experiments. For example, on DNA1, Figure \ref{fig9} shows that ITM takes about 14 s, while NSPG-Pos takes about 9 s. The reason is as follows.  ITM employs a special data structure called I-Tree. The nodes in the $m$-th level represent the frequent patterns with a length of $m$, and ITM uses the $m$-th level nodes to generate and mine frequent patterns with a length of $m+1$. The I-Tree structure is more complex than our method. Therefore, ITM consumes more time and memory than NSPG-Pos. Moreover, according to Figure \ref {fig10}, ITM checks more candidate patterns than NSPG-Pos. We know that the more the candidate patterns, the longer the running time. Hence, NSPG-Pos outperforms ITM.}

\item {Compared with MAPB and MAPD, NSPG-Pos achieves better performance, since NSPG-Pos runs faster than MAPB and MAPD in all experiments. For example, on DNA3, Figure \ref{fig9} shows that MAPB and MAPD take 91 s and 90 s, respectively, while NSPG-Pos only needs about 18 s. The reason for this is as follows.   MAPB and MAPD use an incomplete Nettree structure to calculate the supports of candidate superpatterns with the same prefix pattern which needs to rescan the dataset. However, NSPG-Pos employs the key-value pair array which can effectively avoid  rescanning the dataset. Moreover, from Figure \ref {fig10}, NSPG-Pos  checks fewer candidate patterns than MAPB and MAPD. Hence, NSPG-Pos outperforms MAPB and MAPD.}
\end{enumerate}

In summary, NSPG-Pos, as a part of NSPG-Miner, has better running performance and memory consumption performance than the state-of-the-art PSPG mining algorithms.

\subsection{Comparison with three state-of-the-art NSP mining algorithms}\label {sect5.4}

{We select three state-of-the-art algorithms,  e-NSP \cite{41_Cao2016},  e-RNSP \cite{43_Dong2020}, and ONP-Miner \cite {onpminer}, as the competitive algorithms. We conduct experiments on DNA1, DNA2, DNA3, DNA4, DNA5, DNA6, Virus1, and Virus2 datasets. Since e-NSP can only mine datasets with multiple sequences, we divide these datasets into 4, 6, 8, 10, 12, 14, 16, and 18 characters per sequence, respectively, and the new datasets are called DNA1M, DNA2M, DNA3M, DNA4M, DNA5M, DNA6M, Virus1M, and Virus2M, respectively. Note that both e-NSP and e-RNSP can be considered as mining patterns with self-adaptive gap, and ONP-Miner and NSPG-Miner are mining algorithms with gap constraints. For fairness, we set the minimum gap of ONP-Miner and NSPG-Miner to 0 and the maximum gap to $l_{max}-2$, where $l_{max}$ is the length of the largest sequence in the sequence database, i.e., the gaps on these datasets are [0,2], [0,4], [0,6], [0,8], [0,10], [0,12], [0,14], and [0,16], respectively. }

{Moreover, the threshold value of e-NSP is an integer, while that of NSPG-Miner is a decimal. Therefore, it is impossible to compare the experimental results using the same threshold parameter. To be fair, we adjust the parameters of the four algorithms on different datasets to mine about 50 patterns (it is very difficult to set parameters to mine exactly 50 patterns, since the four algorithms are not top-k mining methods), which are shown in Table \ref {parameter}.}

\begin{table*}[ht]
\centering
\footnotesize
\caption{{Parameters}}	\label{parameter}

\begin{tabular}{ccccccccc}	
\hline

& 	DNA1M &	DNA2M &	DNA3M &	DNA4M &	DNA5M &	DNA6M &	Virus1M	& Virus2M \\\hline
Length of per sequence &	Len=4	&Len=6	&Len=8&	Len=10&	Len=12	&Len=14	&Len=16	&Len=18\\
Threshold of e-NSP	&280	&372	&428	&510	&650	&755	&913	&450\\
Threshold of e-RNSP	&0.02	&0.022	&0.0174	&0.0153	&0.0137	&0.0123	&0.01168	&0.01143\\
Gap &	[0,2]& [0,4]& [0,6]	&[0,8]	&[0,10]& [0,12]& [0,14]& [0,16]\\
$minsup$ of ONP-Miner&	172&	340&	540&	745&	935&	1158&	1523&	810\\
Threshold of NSPG-Miner&	$\rho$=0.0265&	$\rho$=0.022&	$\rho$=0.0188	&$\rho$=0.017	&$\rho$=0.01585&	$\rho$=0.0147 &	$\rho$=0.0131&	$\rho$=0.012\\\hline
\end{tabular}
\end{table*}

The comparisons of the number of mined patterns and negative patterns in the mined patterns and running time are shown in Figures \ref{fig12a}, \ref{fig12}, and \ref{fig12r}, respectively.

	\begin{figure}[h]
		\centering
		\includegraphics[width=0.45\textwidth]{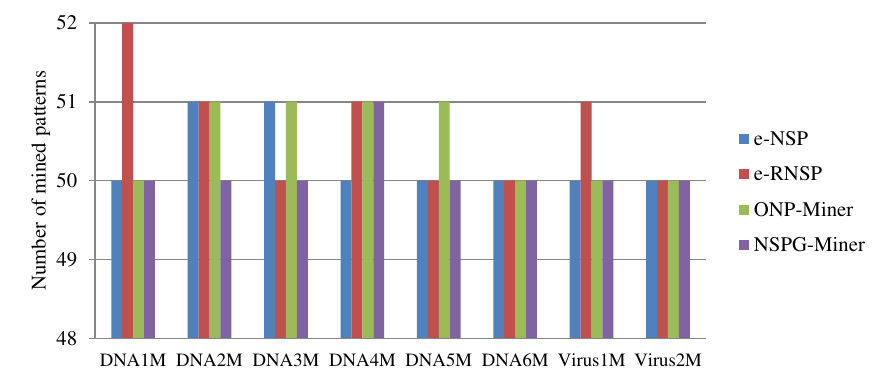}
		\caption{Comparison of  number of mined patterns}
		\label{fig12a}
	\end{figure}

	\begin{figure}[h]
		\centering
		\includegraphics[width=0.45\textwidth]{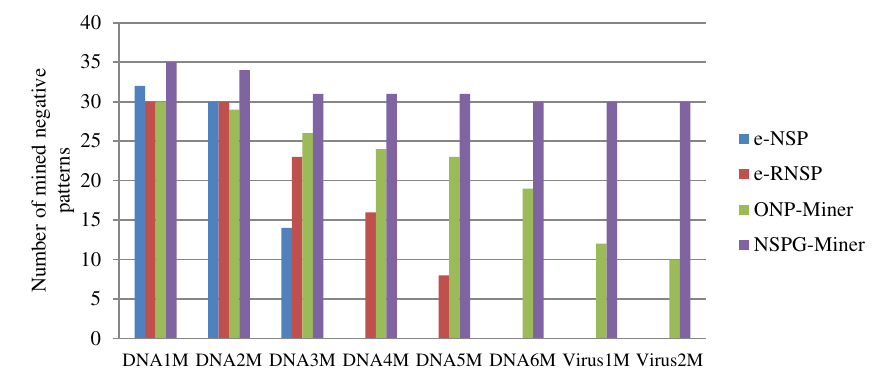}
		\caption{Comparison of  number of mined negative patterns}
		\label{fig12}
	\end{figure}
 
	\begin{figure}[h]
		\centering
		\includegraphics[width=0.45\textwidth]{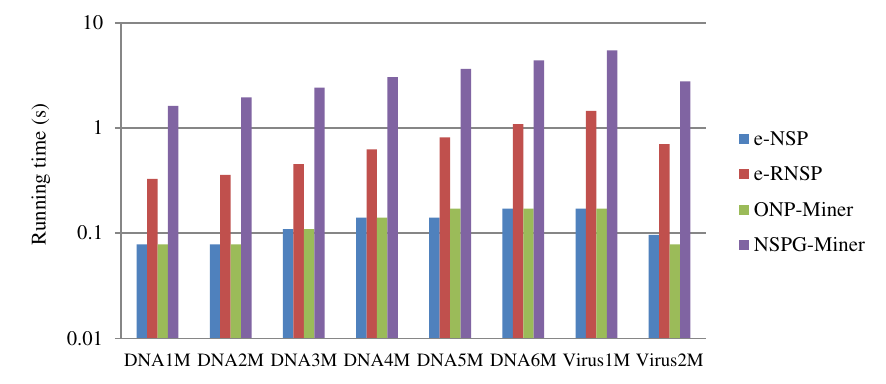}
		\caption{Comparison of running time with NSP mining algorithms}
		\label{fig12r}
	\end{figure}

The results give rise to the following observations.

\begin{enumerate}
\item {  Compared with e-NSP and e-RNSP, ONP-Miner and NSPG-Miner are more suitable for different types of datasets, especially for long sequences.  Figure \ref{fig12} shows that if the sequence length is shorter than 12, e-NSP and e-RNSP can discover negative patterns. Otherwise, e-NSP and e-RNSP cannot discover negative patterns. For example, on DNA6M, the sequence length is 14, and e-NSP and e-RNSP fail to mine NSPs, while NSPG-Miner mines 30 NSPGs. The reason is as follows. Since the negative containment given by e-NSP and e-RNSP is too strict, when the average length of the dataset is long, they cannot mine NSPs. However, ONP-Miner and NSPG-Miner are more flexible in pattern matching. Accordingly, it can play a stable role for datasets with different average lengths.}

\item  {NSPG-Miner can find more negative patterns than e-NSP, e-RNSP, and ONP-Miner. For example, on DNA3M, from Figure \ref{fig12}, e-NSP, e-RNSP, ONP-Miner, and NSPG-Miner discover 14, 23, 26, and 32 NSPs, respectively. e-NSP and e-RNSP put too many constraints on NSPs, which may cause the missing of NSPs. Meanwhile, ONP-Miner and NSPG-Miner alleviates the constraints on negative sequences; therefore, more comprehensive negative sequences can be found. More importantly, the results indicate that NSPG-Miner has a better ability to mine negative patterns than ONP-Miner.}

\item {According to Figure \ref {fig12r}, the weakness of the NSPG-Miner algorithm cannot be neglected, that is, the NSPG-Miner algorithm is slower than e-NSP, e-RNSP, and ONP-Miner. For example, on DNA4, NSPG-Miner algorithm is about 22 times slower than e-NSP, about 5 times slower than e-RNSP,  and about 22 times slower than ONP-Miner, since the running time of e-NSP, e-RNSP, ONP-Miner, and NSPG-Miner are 0.17s, 0.63s, 0.17s and 18.8s, respectively. The reason is as follows. e-NSP is the fastest algorithm, since this mining method does not consider the pattern repetition in a sequence. Thus, it is easy to calculate the support and discover NSPs. Based on e-NSP, e-RNSP was proposed to discover repetitive patterns which means that this mining method calculates the support of a pattern in a sequence. 
Our mining method is more complex than e-RNSP, since our mining method considers gap constraints, while e-RNSP does not.  Although ONP-Miner considers the pattern repetition and gap constraints at the same time, ONP-Miner employs a heuristic strategy to approximately mine the positive and negative patterns. Moreover, ONP-Miner adopts the one-off condition which means that each character of a sequence can be used at most once. However, our mining method is a completeness algorithm and further needs to determine that there are no specific negative items in the gap constraint. More importantly, in our mining method, each character can be reused, which is very time-consuming. Hence, NSPG-Miner is slower than e-NSP, e-RNSP, and ONP-Miner.}
\end{enumerate}


{In summary, compared with the state-of-the-art algorithms, NSPG-Miner can discover negative patterns on long sequences and can mine more negative patterns, but is slower than e-NSP, e-RNSP,  and ONP-Miner. }

\subsection {Influence of parameters} \label {sectass}

To provide a comprehensive evaluation, we also assess the effect of varying the threshold and gap constraint on performance.

\subsubsection {Influence of different thresholds}

To analyze the influence of different thresholds on the performance of the NSPG-Miner algorithm, we select DNA1 as the experimental dataset and select the NSPG-bf, NSPG-df, NSPG-intree, and NSPG-like algorithms as competitive algorithms. The gap constraint of the experiment is set to [0, 15]. All these algorithms discover 137, 121, 90, 87, 60, and 46 frequent patterns with $ \rho $=0.01, $ \rho $=0.012, $ \rho $=0.014, $ \rho $=0.016, $ \rho $=0.018 and $\rho$=0.02, respectively. The comparisons of the running time, number of candidate patterns, and memory usage are shown in Figures \ref{thr5r}, \ref{thr5c}, and \ref{thr5m}, respectively.

	\begin{figure}[h]
		\centering
		\includegraphics[width=0.45\textwidth]{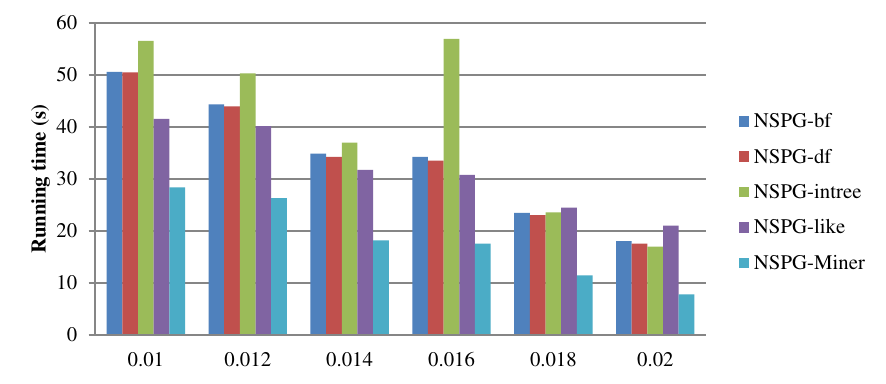}
		\caption{Comparison of running time under different thresholds}
		\label{thr5r}
\vspace{-0.5cm}
\end{figure}
 
	\begin{figure}[h]
		\centering
		\includegraphics[width=0.45\textwidth]{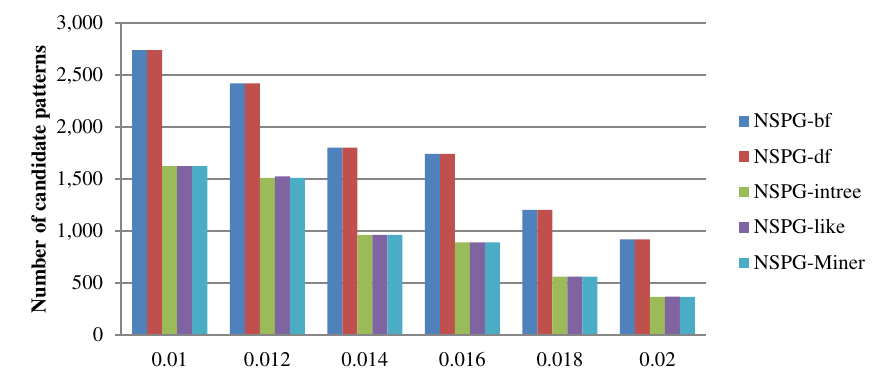}
		\caption{Comparison of number of candidate patterns under different thresholds}
		\label{thr5c}
  \vspace{-0.5cm}
	\end{figure}

	\begin{figure}[h]
		\centering
		\includegraphics[width=0.45\textwidth]{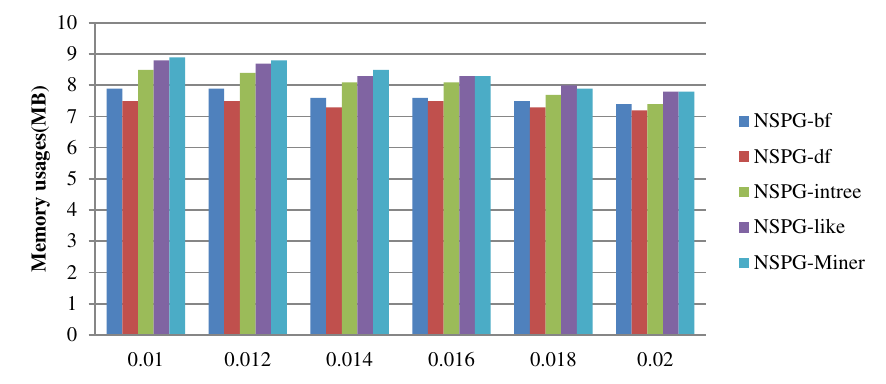}
		\caption{Comparison of memory usage under different thresholds}
		\label{thr5m}
  \vspace{-0.5cm}
	\end{figure}

The results give rise to the following observations.

\begin{enumerate}
\item As $\rho$  increases, the number of frequent patterns mined by all algorithms decreases, and the running time and the number of candidate patterns also decrease. Moreover, the memory usage also decreases. For example, when threshold $\rho$=0.01, all algorithms mine 137 frequent patterns, while when $\rho$=0.02, all algorithms discover 46 frequent patterns. Moreover, when $\rho$=0.01, NSPG-Miner runs for 28.33 s, and when $\rho$=0.02, NSPG-Miner runs for 7.81 s. For a value of $\rho$=0.01, NSPG-Miner checks 1623 candidate patterns, while for $\rho$=0.02, NSPG-Miner checks 367 patterns. Meanwhile, the memory usage decreases from 8.9 MB to 7.8 MB. The reason is as follows. For a given pattern and sequence database, the number of all offset sequences is constant. As the threshold increases, the support increases as well. Therefore, fewer patterns become candidate patterns and the number of mined patterns decreases. Hence, the running time and memory usage also decrease.

\item NSPG-Miner has better running performance than other competitive algorithms no matter what the threshold is. According to Figure \ref{thr5r}, when threshold $\rho$=0.014, compared with the running time of the other four algorithms, the running time of the NSPG-Miner algorithm is 18.14s, which is the shortest. This phenomenon can be found under different thresholds. The experimental results validate that different thresholds do not affect the superiority of our algorithm.
\end{enumerate}

\subsubsection {Influence of different gap constraints}
To analyze the influence of different gap constraints on the performance of the NSPG-Miner algorithm, we select DNA1 as an experimental dataset and select the NSPG-bf, NSPG-df, NSPG-scan, and NSPG-like algorithms for comparison. The parameter of the experiment is $ \rho $=0.01. All these algorithms mine 262, 192, 159, 145, 140, and 136 patterns for gap constraints [0,6], [0,8], [0,10], [0,12], [0,14], and [0,16], respectively. 
Although the gaps are different, all these experiments discover 77 PSPGs, which means that these experiments discover 185, 115, 82, 68, 63, and 59 NSPGs, respectively. The comparisons of running time, number of candidate patterns, and memory usage are shown in Figures \ref{gap5r}, \ref{gap5c}, and \ref{gap5m}, respectively.
	\begin{figure}[h]
		\centering
		\includegraphics[width=0.45\textwidth]{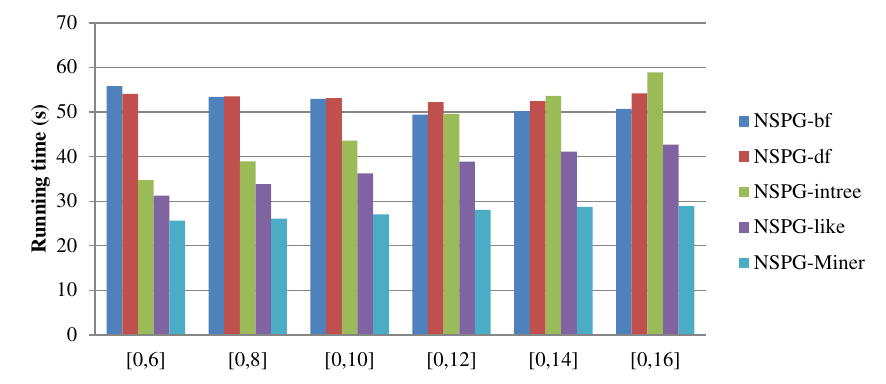}
		\caption{Comparison of running time under different gap constraints}
		\label{gap5r}
	\end{figure}

	\begin{figure}[h]
		\centering
		\includegraphics[width=0.45\textwidth]{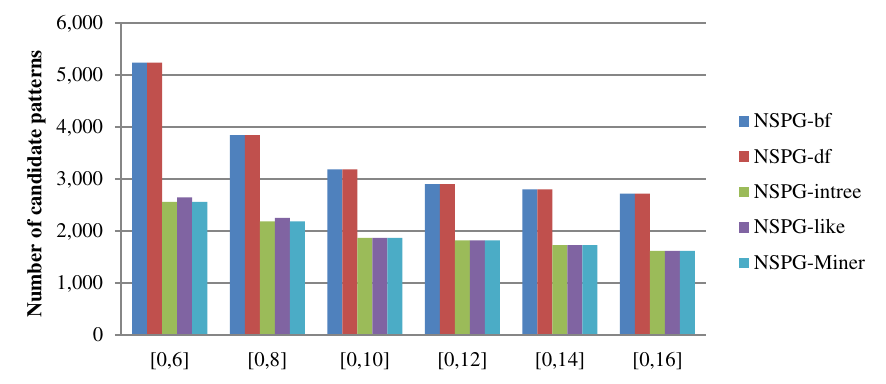}
		\caption{Comparison of number of candidate patterns under different gap constraints}
		\label{gap5c}
	\end{figure}

	\begin{figure}[h]
		\centering
		\includegraphics[width=0.45\textwidth]{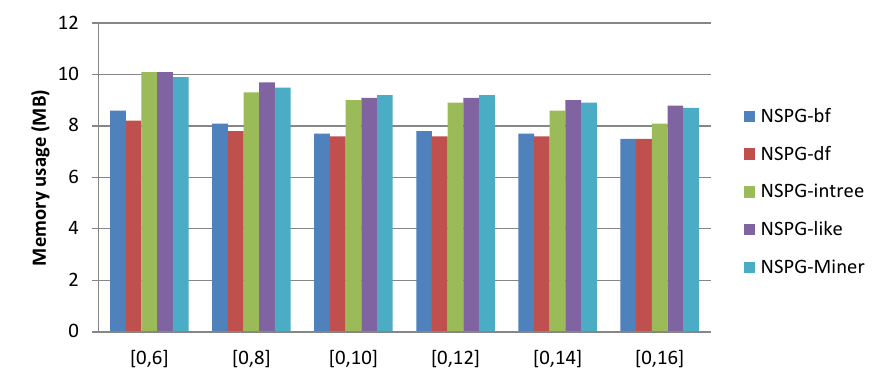}
		\caption{Comparison of memory usage under different gap constraints}
		\label{gap5m}
	\end{figure}

The results give rise to the following observations.

\begin{enumerate}
\item With the increase of the gap constraint, the number of PSPGs mined remains unchanged, the number of mined patterns, candidate patterns, and memory usage decrease, but the running time increases. For example, when the gap constraint increases from [0, 6] to [0,16], the number of mined patterns decreases from 262 to 136. Among them, 77 patterns are PSPGs, while the number of NSPGs decreases from 185 to 59. Moreover, according to Figures \ref{gap5r} to \ref{gap5m}, we know that the number of candidate patterns decreases from 2559 to 1616, and that the memory usage decreases from 9.9 to 8.7 MB, while the running time increases from 25.66 to 28.91s. The reasons are as follows. We know that NSPG-Miner discovers the negative patterns. For a negative pattern, suppose that it has a negative item, which means that the negative item does not occur in the gap. Obviously, the smaller the gap constraint, the fewer items in the gap, and the corresponding probability of missing items increases, thus increasing the number of negative patterns. Therefore, with the increase of the gap constraints, the number of mined NSPGs, the number of candidate patterns, and the memory usage decrease. However, according to Theorem \ref{theorem1}, the time complexity of NSPG-Miner is positively correlated with the size of the gap constraints and the number of candidate patterns. We know that the gap constraint is expanded by 2.67 times, while the number of candidate patterns is reduced by 1.58 times. Thus, the increase of the gap constraints is more significant than the decrease of the number of candidate patterns. Hence, the running time increases.

\item NSPG-Miner is faster than the compared algorithms no matter what the gap constraints is. According to Figure \ref{gap5r}, when the gap constraint is [0, 10], NSPG-Miner is faster than the other four algorithms, with a running time of 27.02 s, which is the shortest. This phenomenon is observed for different gap constraints. The experimental results validate that different gap constraints do not affect the superiority of NSPG-Miner over the compared algorithms.
\end{enumerate}

In summary, different thresholds and gap constraints affect the performance of NSPG-Miner, but does not affect its superiority.

\subsection{Scalability}\label {scalability}

{To evaluate the scalability of NSPG-Miner, we evaluate NSPG-Miner on six datasets with sizes of 100 KB, 150 KB, 200 KB, 250 KB, 300 KB, and 350 KB, which are 10, 15, 20, 25, 30, and 35 times of the HIV dataset, respectively. We select four competitive algorithms: NSPG-bf, NSPG-df, NSPG-intree, and NSPG-like. All algorithms adopt the same parameters: $\rho = 0.008$ and gap constraint = [0,8]. The comparisons of the running time and memory usage are shown in Figures \ref{scr} and \ref{scm}, respectively.}
 
	\begin{figure}[h]
		\centering
		\includegraphics[width=0.45\textwidth]{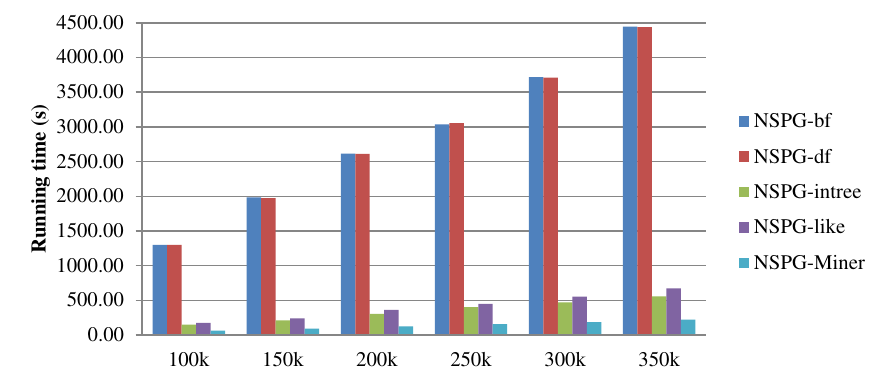}
		\caption{Scalability of NSPG-Miner on running time}
		\label{scr}
	\end{figure}

	\begin{figure}[h]
		\centering
		\includegraphics[width=0.45\textwidth]{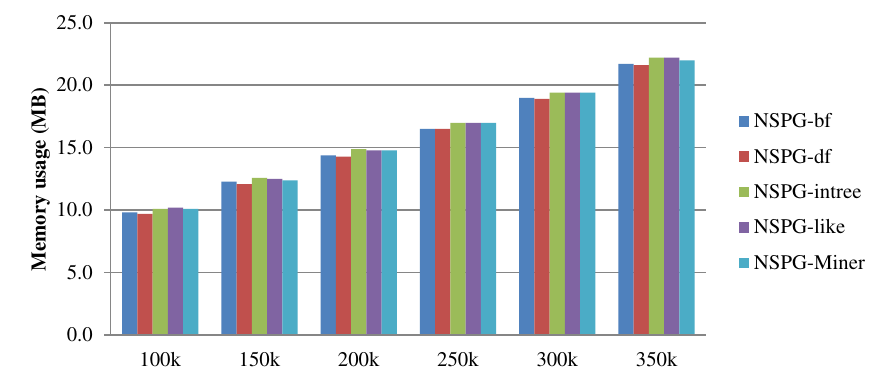}
		\caption{Scalability of NSPG-Miner on memory usage}
		\label{scm}
	\end{figure}

The results give rise to the following observations.

\begin{enumerate}

\item {With the increase of the dataset size, the running time also increases. For example, when the dataset size is 100 kb, NSPG-Miner runs about 62 s; and when the dataset length is 200 kb, NSPG-Miner runs for 125 s. This phenomenon can also be seen for the compared algorithms. The reason is shown as follows. According to Theorem \ref{theorem1}, the time complexity of NSPG-Miner is positively correlated with the dataset length. Therefore, the running time will also increase with the increase of the dataset size. Moreover, we notice that NSPG-Miner is faster than the compared algorithms, especially faster than NSPG-bf. For example, NSPG-Miner runs about 93 s on the 150 kb dataset, while NSPG-bf runs for 1980 s. The reason is that the HIV dataset has 20 different characters and NSPG-bf has to enumerate each character. As mentioned in Example \ref{exam11}, compared with the enumeration tree strategy, the pattern join strategy with negative patterns can effectively reduce the number of candidate patterns. Hence, NSPG-Miner runs faster than NSPG-bf.}

\item {As dataset size increases, the memory usage also increases. For example, NSPG-Miner consumes 10 MB on 100 KB dataset and consumes 19 MB on 300 KB dataset. Moreover, NSPG-Miner consumes more memory than other competitive algorithms. The reason is that NSPG-Miner has to store all frequent patterns and their key-value pair arrays which indicates that NSPG-Miner consumes more memory to improve the mining speed.}
\end{enumerate}

By observing  Figures \ref{scr} and \ref{scm}, we can draw the conclusion that NSPG-Miner has excellent scalability, since the mining performance does not degrade much as the dataset size increases.

\subsection{Case study}\label {sect5.5}
{Ref. \cite{onpminer} has demonstrated that negative SPM with gap constraints under the one-off condition can be used for traffic flow analysis. Our method is also a negative SPM with gap constraints, which can also conduct similar tasks. To better demonstrate the mining ability, we select a bioinformatics analysis application. }

{In 2019, COVID-19 (SARS-2) spread around the world, and in 2003, SARS (SARS-1), which is also a coronavirus, dealt a heavy blow to human beings. COVID-19 is not only the same species as the SARS virus we know, but also officially recognized as a close relative. Many researchers have studied SARS-2 from different aspects. For example, Nawaz et al. \cite{covid19} adopted sequential pattern mining to find hidden patterns that can be used to examine the evolution and variations in COVID-19 strains. Some studies have analyzed the similarity of the two from the perspective of mining maximal patterns \cite{30_Li2021} and closed patterns \cite{29_Wu2020}. However, these studies cannot analyze these two viruses from the perspective of the missing items. To overcome this shortage, this paper adopts NSPG-Miner to analyze SARS-1 and SARS-2 from the perspective of both PSPGs and NSPGs. Moreover, we select ONP-Miner \cite{onpminer} as a competitive algorithm. For fairness, the two algorithms adopt the same gap constraint with [0,15] and for NSPG-Miner, we set $\rho$= 0.006, and for ONP-Miner, we set \textit {minsup}=3000. Therefore, the two algorithms can discover similar number of patterns.  The comparison of mining results is shown in Table \ref {sars}.}

	\begin{table}[h!t]
		\centering
\scriptsize
		\caption{Comparison of mining results}
		\label{sars}
			\begin{tabular}{ccccccc}		\hline
    & \multicolumn{3}{c}{ONP-Miner}   &	\multicolumn{3}{c}{NSPG-Miner}
				\\\hline
   & All patterns	&Positive patterns	&Negative patterns	&All patterns	&Positive patterns &	Negative patterns\\\hline
SARS-1	&159	&144	&11	&158	&83&	75\\
SARS-2	&158	&137	&17	&183	&82&	101\\
Intersection	&148	&137	&11	&149&	76&	73\\
Union	&169	&152	&17	&192	&89&	103\\
Difference	&21	&15	&6	&\textbf{43}	&13	&\textbf{30}\\
				\hline
			\end{tabular}
	\end{table}

The results give rise to the following observations.
\begin{enumerate}
\item According to Table \ref {sars}, NSPG-Miner has better negative pattern mining ability than ONP-Miner, since the two algorithms discover almost the same number of patterns, while NSPG-Miner mines significantly more negative patterns than ONP-Miner. For example, ONP-Miner discovers 159 patterns on SARS-1, and NSPG-Miner mines 158 patterns. Thus, the two algorithms discover almost the same number of patterns. However, ONP-Miner only discovers 11 NSPGs on SARS-1, while NSPG-Miner mines 75. Hence, NSPG-Miner can effectively mine more negative patterns than ONP-Miner.

\item {It is easier to find the differences between SARS-1 and SARS-2 sequences from the negative pattern mining results of NSPG-Miner. Comparing the frequent patterns of SARS-1 and SARS-2, ONP-Miner only discovers 21 different frequent patterns, of which 15 are positive patterns and 6 are negative patterns, while NSPG-Miner mines 43 different frequent patterns, of which 13 are positive patterns and 30 are negative patterns. We can see that the number of positive patterns is almost the same, which is almost consistent with the research results of mining maximal patterns \cite{30_Li2021} and closed patterns \cite{29_Wu2020}. More importantly, the number of different negative patterns discovered by NSPG-Miner is 30/13=2.3 times that of different positive patterns, and five times that of negative patterns of ONP-Miner.  The above results indicate that NSPG-Miner outperforms ONP-Miner, since the patterns mined by NSPG-Miner show more differences within similar viruses, which is beneficial for biologists when analyzing viruses.}
\end{enumerate}
In summary, negative SPM can provide more information than positive SPM, and NSPG-Miner can provide more valuable information than ONP-Miner, which is more beneficial for the research on COVID-19.

\section{Conclusion and future work}\label{section6}
To find patterns with missing items, we study a novel problem of mining repetitive negative sequential patterns with gap constraints and propose an effective algorithm, called NSPG-Miner, which discovers both PSPGs and NSPGs.  NSPG-Miner performs two major steps: candidate pattern generation and support calculation. In the candidate pattern generation stage, we propose a pattern join strategy with negative patterns to generate negative candidate patterns which can effectively prune the redundant patterns and avoid invalid calculations. In the process of support calculation, we propose a NegPair algorithm that employs a key-value pair array to calculate the support of each candidate pattern which can effectively use the results of the subpatterns to calculate the supports of the superpatterns and deal with the gap constraints and the negative items at the same time. To validate the performance of NSPG-Miner, 11 competitive algorithms and 11 datasets are selected. The experimental results show that NSPG-Miner performs better than other competitive algorithms. More importantly, according to the negative pattern mined by NSPG Miner, it is easier to detect differences between virus sequences, since NSPG-Miner can discover more valuable information than the state-of-the-art negative SPM methods.

 {In this paper, we investigate the task of negative SPM with gap constraints to find frequent positive and negative patterns with gap constraints. However, this research has some limitations.}

\begin {enumerate}
\item {Our mining method is stricter than other negative sequential pattern mining methods, such as e-NSP \cite {41_Cao2016} and e-RNSP \cite {43_Dong2020}. Therefore, our mining method is far slower than e-NSP and e-RNSP. How to improve the algorithm running performance is worth exploring.}

\item {In this paper, we investigate a novel problem of mining negative sequential patterns with gap constraints and have employed this mining method to analyze the differences between two virus sequences. Moreover, negative sequential pattern mining with gap constraints can also be applied in many fields, such as customer behavior analysis for detecting abnormal patterns, traffic flow analysis for predicting trends, and the relationship analysis between students' behaviors and their grades. These applications are worth investigating in the future.}

\end {enumerate}



\section*{Acknowledgement}
This work was supported by the National Natural Science Foundation of China (62372154, 52477232, 62120106008). 

\end{document}